\newtheorem{definition}{Definition}
\newtheorem{example}{Example}
\newtheorem{question}{Question}
\newtheorem{theorem}{Theorem}
\newtheorem{lemma}{Lemma}
\title{Coalitional Games with Stochastic Characteristic Functions and Private Types}
\author{Dengji Zhao,\textsuperscript{\rm 1} Yiqing Huang,\textsuperscript{\rm 1} Liat Cohen,\textsuperscript{\rm 2} Tal Grinshpoun\textsuperscript{\rm 2}\\
\textsuperscript{\rm 1}School of Information Science and Technology\\ 
ShanghaiTech University, China\\
\textsuperscript{\rm 2}Department of Computer Science\\ 
Ben-Gurion University of the Negev, Israel\\
\textsuperscript{\rm 3}Department of Industrial Engineering and Management\\ Ariel University, Israel
}
\begin{document}

\maketitle

\begin{abstract}
The research on coalitional games has focused on how to share the reward among a coalition such that players are incentivised to collaborate together. It assumes that the (deterministic or stochastic) characteristic function is known in advance. This paper studies a new setting (a task allocation problem) where the characteristic function is not known and it is controlled by some private information from the players. Hence, the challenge here is twofold: (i) incentivize players to reveal their private information truthfully, (ii) incentivize them to collaborate together. We show that existing reward distribution mechanisms or auctions cannot solve the challenge. Hence, we propose the very first mechanism for the problem from the perspective of both mechanism design and coalitional games. 
\end{abstract}

\section{Introduction}
Cooperative games have been used to model competitions between groups/coalitions of players~\cite{Davis1995kernel}. A cooperative game is defined by specifying a value for each coalition and it studies which coalitions will form and how they share the payoffs to enforce the collaboration. The value for each coalition represents the reward they can achieve together, e.g., finishing a task or building a social good. This value (the characteristic function) is often predefined and public, which can be either deterministic or stochastic. 
However, in real-world applications, what a coalition can achieve may depend on players' capabilities, which is not necessarily known to everyone in advance. 

In this paper, we study a cooperative game where the characteristic function is controlled by some private information owned by the coalition. Specifically, we study a task allocation problem, where a group of players collaborate to accomplish a sequence of tasks in order and there is a deadline to finish all the tasks. 
Because of the deadline, we need to find the best set of players to do the tasks. Each player's capability is modelled by how much time she needs to finish a task. However, the finishing time is not fixed and it is a random variable following some distribution, which is only known to the player. Because of the uncertainty, the objective is to maximize the probability to meet the deadline~\cite{frank1969shortest}. In order to find the best set of players to meet the objective, we need to first know their private distributions (which then defines the probability to meet the deadline for each coalition). That is, the characteristic function is defined by the private distributions of the players in a coalition.

We model the problem as a coalitional game, but the value for each coalition (the characteristic function) is controlled by the players' private information, which has never been investigated in the literature. On one hand, we want each player to tell their true private information to make the best decision, and on the other hand, we want the reward to be fairly distributed among the players. 
Cooperative games are often used to take care of the reward/cost distribution to enforce collaboration, while mechanism design is good at private information elicitation in a competitive environment~\cite{nisan2007algorithmic}. We will show that our challenge cannot be solved by just using techniques from one side. 

Thus, our goal is to design new reward sharing mechanisms such that players are incentivized to report their private distributions truthfully and the reward are distributed fairly among all players. 
To combat this problem, we propose a novel mechanism to solve both challenges using the techniques from both cooperative game theory and mechanism design. Our solution is based on a modified Shapley value which distributes the reward according to the players' capabilities which incentivizes all players to reveal their true capabilities. The cost to achieve the goal is that the total reward distributed is more than what the coalition can generate. For comparison, we also studied a solution from a non-cooperative perspective by using Vickrey-Clarke-Groves (VCG) mechanism~\cite{vickrey1961counterspeculation,clarke1971multipart,groves1973incentives}. VCG is very good at handling private information revelation, but the reward distribution is not as fair as the Shapley value. 

\subsection{Related Work}
As closely related work, coalitional games with stochastic characteristic functions have been a rich line of research initiated by Charnes and Granot~\shortcite{Charnes1976,Charnes1977}. They assumed that the value for each coalition is a random variable following some known/public distributions, so the challenge is about how to promise a payoff for the grand coalition, such that the coalition is still enforceable and it does not over pay if the random outcome is not good. Their setting does not formally model where the randomness comes from. In our model, since each player has a random task completion time, together, we can say that the total completion time for a coalition is a random variable controlled by the players. We focus on how to incentivize the players to reveal their private random variable. Bachrach \textit{et al.}~\shortcite{Bachrach2012} studied the network reliability problem, where an edge has a probability to fail for connecting two points in a network, but the probability is public, which is essentially a special case of \cite{Charnes1976,Charnes1977}. Their focus is to compute the Shapley value efficiently. 

There are also many other studies on coalitional games with imcomplete information from different perspectives. Chalkiadakis and Boutilier~\shortcite{Chalkiadakis2004,Chalkiadakis2012,Chalkiadakis2007} took a learning/bargaining approach to learn the types of the other players in a repeated game form, while we directly ask them to report their types. Li and Conitzer~\shortcite{Li2015} studied the least core concept in a setting where the characteristic function has some well-defined noise (random effect), which is not controlled by any player.  Ieong ad Shoham~\shortcite{Ieong2008} considered the game where each player is uncertain about which game they are playing as a Bayesian game with information partition. On top of the stochastic characteristic function modelled by \cite{Charnes1976},  Suijs \textit{et al.}~\shortcite{Suijs1999,Suijs1999GEB} extended the setting to the case where each coalition has some actions to choose and each action leads to a different random payoff. Myeson~\shortcite{MYERSON2007} further considered that each player has multiple types with some distribution and each player also knows the type distributions of the others, then the paper investigated mechanisms with side payments to incentivize players to reveal their true types. However, when the players' types are given, the payoff they can generate is not stochastic. Our model requires each player to report her private type (which is a distribution) and given their types, the value they can generate together is also stochastic.

Our solution has also applied type verification for their task execution time. \citeauthor{Nisan2001}~\shortcite{Nisan2001} studied a task scheduling problem where each agent first declares a time she needs to finish a task and later when the task is allocated and executed, the mechanism is able to verify how much time the agent actually took. By doing so, the mechanism can verify what the agent reported and pay the agent according to the execution, which is a direct verification of the agent's report. In our setting, we cannot directly verify a player's time distribution, but we are still able to see how much time the execution takes, which is a partial verification of the distribution. This partial verification has been applied first by \citeauthor{porter2008fault}~\shortcite{porter2008fault} in a single task allocation where each worker has a (private) probability to fail the task. Inspired by their solution, different extensions have been investigated~\cite{Ramchurn2009Trust,stein2011algorithms,conitzer2014mechanism,zhao2016fault}.
All these settings are studied from a non-cooperative perspective and mostly for a single task, while our setting has multiple interdependent tasks and looks at the cooperative perspective.



\section{The Model}
We investigate a coalitional game where the characteristic function is not given in advance. It is defined by the capabilities of the players, and each player's capability is a private information of the player. This setting exists in many real-world task allocation problems. 


We consider herein a task allocation problem. 
The problem is of a project that consists of a sequence of $m$ different tasks $T = (\tau_1, \cdots,  \tau_m)$ to be finished in order, i.e. $\tau_i$ cannot be started until all tasks before $\tau_i$ have been finished. There is a deadline $d$ to finish the entire project. Finishing all tasks before the deadline generates a value $V$, and finishing the tasks after the deadline does not generate any value.

There are $n$ agents (players) denoted by $N= \{1, \cdots, n\}$ who can perform the tasks with different capabilities. Without loss of generality, we assume that each player is only capable of doing one of the tasks. Let $N_{\tau_i} \subseteq N$ be the set of players who can handle task $\tau_i$. We have $N_{\tau_i} \neq \emptyset$ for all $\tau_i\in T$, $N_{\tau_i} \cap N_{\tau_j} = \emptyset$ for all $\tau_i \neq \tau_j\in T$, and $\cup_{\tau_i \in T} N_{\tau_i} = N$.

For each player $i\in N_{\tau_j}$, her capability to handle task $\tau_j$ is measured by the execution time she needs to accomplish $\tau_j$. We also consider that there is some uncertainty for player $i$ to accomplish the task. Therefore, player $i$ does not know the exact time she will need to finish $\tau_j$, but she does know a duration distribution. 

Studies on various task allocation problems assume private information of the players, regarding either probability of success~\cite{porter2008fault,zhao2016fault} or task duration~\cite{stein2011algorithms,conitzer2014mechanism}. Similarly, in our coalitional game the execution time distribution is $i$'s private information to determine the characteristic function. Let the discrete random variable $E_i$ denote the execution time of player $i$ on task $\tau_j$. We use $e_i \in \{1, 2, \cdots\}$ to denote a realization of $E_i$. Let $f_i$ be the probability mass function of $E_i$, i.e., $Pr[E_i = e_i] = f_i(e_i)$. There might be a cost $c_i$ for $i$ to execute task $\tau_j$. We assume the cost is public and it can be ignored for the current analysis.

The goal of the above coalitional game is to find the optimal task allocation (one task can only be allocated to at most one player) such that the tasks $T$ can be finished before deadline $d$ with the highest probability. This would generate the highest expected value/reward for the players. That is, the characteristic function $v: 2^N \rightarrow \mathbb{R}$ can be defined by 
\begin{quote}
$v(S)$ equals the highest probability that a group of players $S\subseteq N$ can finish all the tasks $T$ before $d$.
\end{quote}
It is evident that the definition of $v$ satisfies $v(\emptyset) = 0$ and the \emph{monotonicity} property, i.e., $v(S) \leq v(T) \leq 1$ for all $S \subseteq T \subseteq N$. $v(S)\times V$ is the expected value the coalition $S$ can cooperate to achieve.

Since duration distributions are private and $v(S)$ is not publicly known, we cannot easily achieve the goal and share the reward among the players with the standard techniques for coalitional games. The challenge here is that players can manipulate the game by misreporting their time distributions, which is not available in classical coalitional games. 

\emph{The goal of this paper is to design new reward sharing mechanisms for the above game such that players are incentivized to report their time distributions truthfully. 
}

The reward sharing mechanism requires each player to report her execution time distribution, but the player may not necessarily report her true distribution. For each player $i\in N$, let $f_i$ be the density function of her true distribution, and $f_i^\prime$ be her report. Let $f=(f_1, \cdots, f_n)$ be the true density function profile of all players and $f^\prime=(f_1^\prime, \cdots, f_n^\prime)$ be their report profile. We also denote $f$ by $(f_i, f_{-i})$ and  $f^\prime$ by $(f_i^\prime, f_{-i}^\prime)$. Let $\mathcal{F}_i$ be the density function space of $f_i$ and $\mathcal{F}=(\mathcal{F}_1, \cdots, \mathcal{F}_n)$ be the space of density function profile $f$.

\begin{definition}
A \textbf{reward sharing mechanism} is defined by $x = (x_i)_{i\in N}$, where $x_i: \mathcal{F} \rightarrow \mathbb{R}$ defines the reward player $i$ receives given all players' report profile. 
\end{definition}

To incentivize players to report their private information truthfully, a concept called \textbf{incentive compatible} has been defined in the literature of mechanism design. We apply the same concept here to define the incentives to report their time distributions truthfully. We say a reward sharing mechanism is incentive compatible if for each player reporting her time distribution truthfully is a dominant strategy.

\begin{definition}
A reward sharing mechanism $x$ is \textbf{incentive compatible} if for all players $i\in N$, for all $f\in \mathcal{F}$, for all $f^\prime\in \mathcal{F}_i$, we have $x_i(f) \geq x_i(f_i^\prime, f_{-i})$.
\end{definition}

In the rest of the paper, we study incentive compatible reward sharing mechanisms.

\section{The Failure of the Shapley Value}
Shapley value is a well-known solution concept in cooperative game theory~\cite{shapley1953value}. It divides the reward among the players in a coalition according to their marginal contributions. It has many desirable properties such as \emph{efficiency} (the reward is fully distributed to the players), \emph{symmetry} (equal players receive equal rewards) and \emph{null player} (dummy players receive no reward).

If we simply apply the Shapley value in our setting, the reward for each player is defined as:
\begin{equation}
\scalebox{0.85}{
$x_i^{sha}(f^\prime) =\displaystyle \sum_{S\subseteq N\setminus \{i\}} \dfrac{|S|!(n-|S|-1)!}{n!}(v(S\cup \{i\}) - v(S))
$}
\end{equation}
where $v(S\cup \{i\})$ and $v(S)$ are defined under the players' report profile $f^\prime$.

The Shapley value of a player $i$ is the average marginal contribution among all permutations of the players. In each permutation, player $i$'s marginal contribution is $v(S\cup \{i\}) - v(S)$, where $S$ is the set of all players before $i$ in the permutation.

{\color{black}In this paper, we assume that in the grand coalition, for each task group, the player who is assigned the task has the highest Shapley value among the same task group. This induces some conditions on the time distributions, which is not clear what the exact conditions are. Intuitively, it implies if a player is better than the others in the same task group in the grand coalition, then it should also be better than them in sub-coalitions.}

Let us consider a simple example:
\begin{quote}
\begin{example}
\label{ex1}
There are two tasks $T=(\tau_1, \tau_2)$ to be finished before deadline $2$ and four players $N = \{1,2,3,4\}$ with $N_{\tau_1}=\{1,2\}$ and $N_{\tau_2}=\{3,4\}$. Their execution time density functions are:
\begin{equation}
f_1(e) = f_3(e) = 
\begin{cases}
    3/4      & \quad \text{if } e = 1\\
    1/4    & \quad \text{if } e = 3\\
    0       & \quad \text{otherwise}
  \end{cases}
\end{equation}
\begin{equation}
f_2(e) = f_4(e) = 
\begin{cases}
    1/4      & \quad \text{if } e = 1\\
    3/4    & \quad \text{if } e = 2\\
    0       & \quad \text{otherwise}
  \end{cases}
\end{equation}

If all players report their density functions truthfully, then the allocation to maximize the probability to finish both $\tau_1$ and $\tau_2$ is to assign $\tau_1$ to player $1$ and $\tau_2$ to player $3$. The probability is $9/16$, i.e., $v(N) = 9/16$.
\end{example}
\end{quote}

The characteristic function in Example~\ref{ex1} is defined as: $v(\{1,3\}) = v(\{1,2,3\}) = v(\{1,3,4\}) = v(\{1,2,3,4\}) = 9/16$, $v(\{1,4\})=v(\{1,2,4\})=3/16$, $v(\{2,3\})=v(\{2,3,4\})=3/16$, $v(\{2,4\})=1/16$, and the value for the rest coalitions is zero.

To compute the Shapley value for player $1$, we list out all the permutations of the players and check player $1$'s marginal contribution in each permutation in the following table. The average marginal contribution of player $1$ among all permutations is $47/192 \approx 0.2448$.

\begin{center}
  \begin{tabular}{ c | c | c | c }
    \hline
    order & \# permut. & S & $v(S\cup \{1\}) - v(S)$ \\ \hline
	(1,\{2,3,4\}) & 6 & \{\} &  0 \\ \hline
    (2,1,\{3,4\}) & 2 & \{2\} & 0 \\ \hline
    (3,1,\{2,4\}) & 2 & \{3\} & 9/16 \\ \hline
    (4,1,\{2,3\}) & 2 & \{4\} &  3/16 \\ \hline
    (\{3,4\},1,2) & 2 & \{3,4\} &  9/16 \\ \hline
    (\{2,4\},1,3) & 2 & \{2,4\} &  2/16 \\ \hline
    (\{2,3\},1,4) & 2 & \{2,3\} &  6/16 \\ \hline
    (\{2,3,4\},1) & 6 & \{2,3,4\} & 6/16 \\ \hline
  \end{tabular}
\end{center}

Similarly, we get the Shapley value for all players:
\begin{center}
  \begin{tabular}{ c | r  }
    \hline
    player & $x_i^{sha}(f)$ \\ \hline
	1 & 47/192 \\ \hline
	2 & 7/192 \\ \hline
	3 & 47/192 \\ \hline
	4 & 7/192 \\ \hline
  \end{tabular}
\end{center}

\begin{question}
\label{q1}
In Example~\ref{ex1}, only players $1$ and $3$ actually perform the tasks, so can we reward only players $1$ and $3$ according to their Shapley value?
\end{question}
The answer is no. For example, player $2$ could misreport a density function $f_2^\prime$ such that $v(\{2,3\}) > v(\{1,3\})$, in order to be selected for $\tau_1$ and receive a non-zero Shapley value. Therefore, we cannot exclude rewards to players who have not been assigned a task in the reward sharing mechanism.

\begin{question}
\label{q2}
In Example~\ref{ex1}, can any of the players misreport to gain a higher Shapley value?
\end{question}
The answer is yes. For instance, player $1$ can misreport $f_1^\prime$ such that
\begin{equation}
f_1^\prime(e) = 
\begin{cases}
    1      & \quad \text{if } e = 1\\
    0       & \quad \text{otherwise}
  \end{cases}
\end{equation}
If the other players still report truthfully, player $1$'s Shapley value under this misreport is $67/192$, which is larger than the value $47/192$ under report $f_1$. {\color{black}This kind of misreport applies to all players.}
The reason is that their Shapley value only depends on what they have reported, not what they can actually do.

In next section, we show how to modify the Shapley value to link it to the players' true time distributions.

\section{Truthful Shapley Value}
As evident from Question~\ref{q2}, players can report a more promising execution time distribution to receive a higher Shapley value. This is because the Shapley value mechanism never verifies their reports. In reality, we could actually observe how much time a player has spent to accomplish her task. Therefore, we can pay them according to their execution outcomes. A similar approach has been applied in other task allocation settings by using auctions, especially VCG mechanism~\cite{porter2008fault,Ramchurn2009Trust,conitzer2014mechanism,stein2011algorithms,zhao2016fault}. 

\begin{definition}
Given all players' execution time distribution report profile $f^\prime = (f_1^\prime, \cdots, f_n^\prime)$, for each coalition $S\subseteq N$, $\pi_S^{f^\prime}: T \rightarrow N \cup \{\perp\}$ is the \textbf{task assignment} to define $v(S)$. $\pi_S^{f^\prime}(\tau_j) = \perp$ means that $\tau_j$ has not been assigned to any player under coalition $S$ with reports $f^\prime$. 
\end{definition}

Next are some specific notations for the new mechanism. 
\begin{itemize}
\item Let $v(S, f^\prime)$ be the highest probability to finish all the tasks before the deadline under the report profile $f^\prime$.
\item Let $v(\pi_S^{f^\prime}, f^{\prime\prime})$ 
be the probability to finish all the tasks before the deadline given that the optimal task assignment is defined by $\pi_S^{f^\prime}$ but the actual probability to finish the tasks is calculated by $f^{\prime\prime}$. That is, $v(\pi_S^{f^\prime}, f^{\prime\prime})$ uses $f^\prime$ to determine the optimal task assignment under coalition $S$, but uses $f^{\prime\prime}$ to recalculate the probability without changing the task assignment. In the mechanism, $f^\prime$ represents their reports and $f^{\prime\prime}$ represents what we observed.
\end{itemize}

\begin{framed}
	\noindent\textbf{Shapley Value with Execution Verification (SEV)}\\
	\rule{\textwidth}{0.5pt}
Given all players' report profile $f^\prime$, 

\begin{itemize}
\item for each player $i$ who has been assigned a task under $\pi_N^{f^\prime}$, if her realised execution time is $e_i$, then her Shapley value is updated as:
  
\begin{equation}
\label{eq_sev1}
\scalebox{0.69}{
$x_i^{sev}(f^\prime, e_i)=\displaystyle \sum_{S\subseteq N\setminus \{i\}} \dfrac{|S|!(n-|S|-1)!}{n!}(v(\pi_{S\cup \{i\}}^{f^\prime}, (f_i^{e_i},f_{-i}^\prime)) - v(S, f^\prime))
$}
\end{equation}
where $f_i^{e_i}$ represents the realization $e_i$ and is defined as:
\begin{equation*}
f_i^{e_i}(e) = 
\begin{cases}
    1      & \quad \text{if } e = e_i\\
    0       & \quad \text{otherwise}
  \end{cases}
\end{equation*}

\item for each player $j$ who has not been assigned any task in the assignment $\pi_N^{f^\prime}$, her Shapley value stays the same (as we cannot observe $j$'s execution time):
\begin{equation}
\label{eq_sev2}
\scalebox{0.69}{
$x_j^{sev}(f^\prime) = \displaystyle \sum_{S\subseteq N\setminus \{j\}} \dfrac{|S|!(n-|S|-1)!}{n!}(v(S\cup \{j\}, f^\prime) - v(S, f^\prime))
$}
\end{equation}
\end{itemize}
\end{framed}

\begin{theorem}
\label{thm_sevic}
The SEV mechanism is incentive compatible for all players who are assigned a task, but it is not incentive compatible for players who are not assigned a task.
\end{theorem}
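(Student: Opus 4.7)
The plan is to prove the two halves separately: incentive compatibility for any player $i$ who is assigned a task, and a concrete failure of incentive compatibility for a player who is not. For the positive direction I will take an expectation over the execution-time realisation $e_i$ drawn from $i$'s \emph{true} distribution $f_i$; for the negative direction I will revisit Example~\ref{ex1} and exhibit a profitable misreport.

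For the positive direction, fix a report profile $f^\prime = (f_i^\prime,f_{-i}^\prime)$ under which player $i$ is assigned a task in $\pi_N^{f^\prime}$. Taking the expectation of equation~(\ref{eq_sev1}) over $e_i \sim f_i$ and using the key identity
\begin{equation*}
E_{e_i \sim f_i}\bigl[v(\pi_{S\cup\{i\}}^{f^\prime},(f_i^{e_i},f_{-i}^\prime))\bigr] = v(\pi_{S\cup\{i\}}^{f^\prime},(f_i,f_{-i}^\prime)),
\end{equation*}
which holds because, for any fixed assignment $\pi$, the probability of finishing all tasks by the deadline is linear in player $i$'s distribution over $E_i$, the expected reward reduces to
\begin{equation*}
\sum_{S \subseteq N \setminus \{i\}} \dfrac{|S|!(n-|S|-1)!}{n!}\bigl(v(\pi_{S\cup\{i\}}^{f^\prime},(f_i,f_{-i}^\prime)) - v(S,f^\prime)\bigr).
\end{equation*}
Since $i\notin S$, the subtracted term depends only on $f_{-i}^\prime$ and is invariant in $f_i^\prime$, so player $i$ only controls the first summand. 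That summand evaluates the assignment $\pi_{S\cup\{i\}}^{f^\prime}$ against the \emph{true} profile $(f_i,f_{-i}^\prime)$; by the definition of $\pi_{S\cup\{i\}}^{f^\prime}$ as the optimiser under the \emph{declared} profile, it attains the maximum possible value $v(S\cup\{i\},(f_i,f_{-i}^\prime))$ precisely when $f_i^\prime = f_i$. Thus truthful reporting weakly dominates every deviation term-by-term and yields the first claim.

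For the negative direction, I will use Example~\ref{ex1} with player $2$, who is unassigned under truthful reporting and whose reward is therefore computed by equation~(\ref{eq_sev2})---the plain Shapley value of the reported profile, with no execution verification. I will choose $f_2^\prime$ that is strictly more optimistic than $f_2$ (say, shifting mass from $e=2$ towards $e=1$) but still poor enough that player $1$ keeps $\tau_1$ in $\pi_N^{(f_2^\prime,f_{-2})}$. Then verification never triggers for player $2$, her reward equals the Shapley value of the inflated profile, and a direct recomputation analogous to the table preceding Question~\ref{q2} will show it strictly exceeds the truthful value $7/192$, contradicting incentive compatibility.

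The main obstacle I anticipate is the boundary case in the positive direction in which a deviation $f_i^\prime$ by an otherwise-assigned player flips her status to \emph{unassigned} under $\pi_N^{(f_i^\prime,f_{-i}^\prime)}$, so that her reward switches from equation~(\ref{eq_sev1}) to (\ref{eq_sev2}) and the clean term-by-term comparison above no longer matches formulas of the same shape. Reconciling this case relies on the standing assumption that the player assigned to each task in the grand coalition has the highest Shapley value within her task group, together with a monotonicity observation that any report capable of dislodging $i$ must make every marginal contribution $v(S\cup\{i\},f^\prime) - v(S,f^\prime)$ appearing in (\ref{eq_sev2}) no larger than the corresponding summand in the truthful expected reward; I expect this to be the most delicate step.
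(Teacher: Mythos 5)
Your proposal is correct and follows essentially the same route as the paper: both take the expectation over $e_i \sim f_i$, decompose the expected reward coalition by coalition, and observe that each term $v(\pi_{S\cup\{i\}}^{f^\prime},(f_i,f_{-i}^\prime))$ is dominated by the optimum $v(S\cup\{i\},(f_i,f_{-i}^\prime))$ attained at truthful reporting (your single optimality inequality subsumes the paper's two-case analysis), while your negative half is the paper's generic Shapley-inflation argument concretely instantiated on Example~\ref{ex1}. The boundary case you flag --- a deviation that demotes $i$ to unassigned in the grand coalition and switches her payment to Equation~\eqref{eq_sev2} --- is handled no more rigorously in the paper's own proof, which also leans on the standing assumption that the assigned player dominates her task group across sub-coalitions, so you are not missing anything the authors actually supply.
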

\begin{proof}
For each player $i$ who has been assigned a task, $i$'s reward varies according her execution outcomes, but her expected reward is $E[x_i^{sev}(f^\prime, e_i)] = \sum_{e_i\in E_i} f_i(e_i)x_i^{sev}(f^\prime, e_i)$. If $i$ reports her true distribution, i.e., $f_i^\prime= f_i$, $E[x_i^{sev}(f^\prime, e_i)]$ equals her Shapley value $x_i^{sha}(f^\prime)$. Now if $f_i^\prime \neq f_i$, can $i$ receive a larger $E[x_i^{sev}(f^\prime, e_i)]$?

No matter what $i$ reports, $i$ is either assigned or not assigned the task. If $i$ is assigned the task when $i$ reports $f_i$, then we show that her expected reward is maximized when $i$ reports $f_i$. For the expected reward, we could look the expected reward (marginal contribution) in each player permutation. For each permutation, assume $S\subset N$ is the set of players before $i$:
\begin{itemize}
\item if $i$ has a non-zero marginal contribution, i.e., $v(S\cup \{i\}) - v(S) > 0$, it means that $i$ is assigned the task in the coalition $S\cup \{i\}$. Then $i$'s expected marginal contribution would be $\sum_{e_i\in E_i} f_i(e_i)(v(S\cup \{i\}, e_i) - v(S))$, where $v(S\cup \{i\}, e_i)$ is the probability to finish all the tasks when $i$'s execution time is fixed to $e_i$ without changing the task assignment defined by $\pi_{S\cup \{i\}}^{f^\prime}$. If $i$ reports $f_i$, then $\sum_{e_i\in E_i} f_i(e_i)v(S\cup \{i\}, e_i) = v(S\cup \{i\})$. If $i$ reports $f_i^\prime$ which dominates $f_i$, then $i$ is still assigned the task, but $\sum_{e_i\in E_i} f_i(e_i)v(S\cup \{i\}, e_i)$ stays the same (although in this case $v(S\cup \{i\})$ might be increased). If $i$ reports $f_i^\prime$ such that $i$ is not assigned the task under the coalition $S\cup \{i\}$, then $i$'s marginal contribution becomes zero. Therefore, reporting $f_i$ truthfully maximizes $i$'s expected marginal contribution in this permutation.
\item if $i$ has a zero marginal contribution, i.e., $v(S\cup \{i\}) = v(S)$, it means that $i$ is not assigned the task in the coalition $S\cup \{i\}$. Thus, $i$'s expected marginal contribution in this case is zero. If $i$ reports $f_i^\prime$ such that $i$ is assigned the task under the coalition $S\cup \{i\}$, then the expected probability $\sum_{e_i\in E_i} f_i(e_i)v(S\cup \{i\}, e_i)$ would be less than $v(S\cup \{i\})$. This is because $i$ takes a task from another player and makes the allocation non-optimal. Therefore, $i$'s expected marginal contribution in this case is negative. Hence, reporting $f_i$ also maximizes $i$'s expected marginal contribution in this case.
\end{itemize}
Since the expected marginal contribution in each permutation is maximized when $i$ reports truthfully, then her total expected reward is also maximized in this case.

For each player $j$ who is not assigned a task, we can easily find a counter example where $j$ can misreport to gain a higher reward. We can always find a setting and a permutation where $j$'s marginal contribution is zero. For this permutation, we can change $j$'s distribution such that $j$'s marginal contribution is greater than zero, but $j$ is still not assigned a task in the grand coalition $N$. By doing so, the mechanism cannot observe $j$'s execution outcome and simply pays her the standard Shapley value which is greater than what she could have when she reports $f_j$ truthfully. 
\end{proof}

As seen in Theorem~\ref{thm_sevic}, players who were not assigned a task can misreport to gain a higher reward under the SEV mechanism, because there is a lack of verification on their reports. It is certainly not ideal to assign each task to all  its players to try, which is also not practical. Instead, we do the manipulation on behalf of the players to maximize their rewards they could gain. For each player who is not assigned the task, we treat this player as good as the player who is assigned the task to calculate her new Shapley value as her reward. This reward is the best the player could get by misreporting, and therefore, there is not incentive for misreporting anymore. The updated mechanism is defined as follows.

\begin{framed}
	\noindent\textbf{Shapley Value with Execution Verification and Bonus (SEVB)}\\
	\rule{\textwidth}{0.5pt}
Given all players' report profile $f^\prime$, 

\begin{itemize}
\item for each player $i$ who is assigned a task under $\pi_N^{f^\prime}$, if her realised execution time is $e_i$, then her Shapley value is defined as the same as in SEV (Equation~\eqref{eq_sev1}), i.e., $x_i^{sevb}(f^\prime, e_i) = x_i^{sev}(f^\prime, e_i)$.

\item for each player $j$ (assume $j\in N_{\tau_i}$) who is not assigned the task $\tau_i$ in $\pi_N^{f^\prime}$, her Shapley value is upgraded as $x_j^{sevb}(f^\prime) =$

\begin{equation}
\label{eq_sevb}
\scalebox{0.66}{
$\displaystyle \sum_{S\subseteq N\setminus \{j\}} \dfrac{|S|!(n-|S|-1)!}{n!}(v(S\cup \{j\}, (f_j^*, f_{-j}^\prime)) - v(S, (f_j^*, f_{-j}^\prime)))
$}
\end{equation}
where $f_j^* = f_{i^*}^\prime$ and $i^* = \pi_N^{f^\prime}(\tau_i)$, i.e., $i^*$ is the player who is assigned $\tau_i$.
\end{itemize}
\end{framed}

\begin{theorem}
The SEVB mechanism is incentive compatible.
\end{theorem}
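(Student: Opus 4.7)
The plan is to verify the dominant-strategy condition separately for players who are assigned a task under $\pi_N^{f^\prime}$ and for those who are not, and within each type to rule out every qualitatively distinct deviation.

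For an assigned player $i$, we would first handle misreports that leave $i$ assigned: there the SEVB payoff coincides with SEV, so the permutation-by-permutation argument of Theorem~\ref{thm_sevic} carries over directly, and the execution-time verification term $f_i^{e_i}$ prevents an overly optimistic report from inflating the expected marginal contribution. The genuinely new case is a misreport that displaces $i$ by some $i'\in N_{\tau_i}$; then $i$ collects the bonus \eqref{eq_sevb} evaluated with $f_i^* = f_{i'}^\prime$, and we must argue this is at most $x_i^{sha}(f)$, which follows from the standing assumption that the assigned player has the highest Shapley value in its task group together with a monotonicity fact about the Shapley value in the player's own distribution. For an unassigned player $j$ in group $N_{\tau_i}$, the crucial structural observation is that $j$'s own report enters the bonus formula only through the identity of $i^* = \pi_N^{f^\prime}(\tau_i)$, since $f_j^\prime$ is overwritten by $f_j^* = f_{i^*}^\prime$ in every $v$-evaluation. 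Deviations therefore split into three subcases: (i) $i^*$ is unchanged, so the bonus is literally the same as under truth; (ii) $i^*$ is replaced by some $i^{**}\in N_{\tau_i}$ while $j$ stays unassigned, in which case $i^{**}$ has a weakly worse reported distribution than the truthfully selected $i^*$ and the same monotonicity makes the bonus weakly decrease; and (iii) $j$ becomes assigned itself, in which case $j$ is paid by SEV and the argument of Theorem~\ref{thm_sevic} bounds $j$'s expected reward by $x_j^{sha}(f_j, f_{-j}^\prime)$, which is in turn dominated by the truthful bonus because the latter evaluates the Shapley value with the superior distribution $f_{i^*}^\prime$ substituted for $f_j$.

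The main obstacle will be the monotonicity lemma common to these cases: one must show that the Shapley value of a fixed player in the induced task-completion game is monotone with respect to replacing that player's own distribution by a better one in the sense used by the optimal-assignment rule. The intended route is to establish this marginal-contribution by marginal-contribution, since for every coalition $S$ containing the player, $v(S,\cdot)$ is the maximum over task assignments of a product of completion probabilities and swapping in a stochastically dominant distribution can only raise that maximum; every marginal contribution, and hence the Shapley value, inherits the monotonicity. Combined with the standing assumption that the truthfully optimal assignment picks the Shapley-value maximizer within each task group, this monotonicity closes all three unassigned-player subcases and the displacement subcase for assigned players, completing the argument.
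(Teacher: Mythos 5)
Your proposal follows essentially the same route as the paper's proof: the same split into assigned versus unassigned players, the same case analysis on whether the deviation changes the grand-coalition assignment, and the same permutation-level comparison of marginal contributions, with the monotonicity lemma you isolate being exactly the step the paper disposes of with ``which is clearly not better than'' --- so making it explicit is a refinement of the same argument rather than a different approach. Two harmless quibbles: your subcase (ii) for unassigned players is vacuous, since an unassigned $j$'s report cannot alter $\pi_N^{f^\prime}$ without making $j$ itself the assignee, and $v(S,\cdot)$ is the tail probability of a \emph{sum} of completion times rather than a product, though first-order stochastic dominance still delivers the monotonicity you need.
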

\begin{proof}
For each player $i$ who is assigned a task, as proved in Theorem~\ref{thm_sevic}, there is no incentive to misreport when the players who are not assigned a task are paid according to Equation~\ref{eq_sev2}. Now the unassigned players' reward has upgraded to Equation~\ref{eq_sevb}, we need to prove that it is not better for $i$ to misreport such that $i$ is not assigned the task in the grand coalition.

Assume that $i$ is assigned task $\tau_i$, if $i$ misreports $f_i^\prime$ such that $\tau_i$ is assigned to $j$, then $i$'s reward will the upgraded Shapley value when we treat $i$'s report as identical as $j$'s report. Initially, we know that $j$'s report is not better than $i$'s (otherwise, $i$ would not be assigned). Now $i$'s misreport $f_i^\prime$ is not as good as $j$'s. We can prove that $i$'s reward under $f_i^\prime$ is not better than under $f_i$. This can be proved from each single permutation. 
\begin{itemize}
\item If $i$ has a non-zero marginal contribution in a permutation when $i$ reports truthfully (assume $S$ are the players before $i$), when $i$ reports $f_i^\prime$, $i$ is either assigned or not assigned the task. If $i$ is still assigned the task, then her marginal contribution is calculated as when treat $i$ as $j$, which is clearly not better than $i$ reports truthfully. If $i$ is not assigned the task, then her marginal contribution becomes zero, which is again worse than reporting truthfully.  
\item If $i$ has zero marginal contribution in a permutation when $i$ reports truthfully, then the contribution stays the same if she reports $f_i^\prime$.
\end{itemize}
In summary, we get that player $i$'s reward is not increased when she misreports $f_i^\prime$.

For player $j$ who is not assigned a task, assume $j$ belongs to task group $\tau_i$ and $i$ is assigned $\tau_i$ when $j$ reports $f_j$:
\begin{itemize}
\item if $j$ misreports $f_j^\prime$ such that $f_j^\prime$ is not dominating $i$'s report $f_i^\prime$, then $j$ is still not assigned task $\tau_i$ and still receives the same reward. 
\item if $f_j^\prime$ dominates $f_i^\prime$, then $j$ will be assigned $\tau_i$. We will prove that $j$'s expected reward under this case is not better than reporting $f_j$. 
\end{itemize}

When $f_j^\prime$ dominates $f_i^\prime$, $j$ is assigned $\tau_j$ in the grand coalition and will execute task $\tau_i$. For each permutation where $j$'s marginal contribution is non-zero under $f_j^\prime$, then under $f_j$, $j$'s marginal contribution can be either zero or non-zero. If it is zero, then her expected reward in this permutation will be negative, because $j$ is forced to change the optimal allocation, but could not increase the value. If it is non-zero, then her expected reward stays the same, as the reward is calculated according to her execution outcomes (her true $f_j$).

Together, we proved that $j$ cannot receive a better reward by misreporting.
\end{proof}

Since the SEVB mechanism may pay more than their actual Shapley values for players who did not receive a task, the total payment together might be greater than the total reward the grand coalition can gain. Theorem~\ref{thm_bound2} shows that the extra payment is bounded.

\begin{theorem}
\label{thm_bound2}
Given any execution time distribution report profile $f$, the total reward distributed under the SEVB mechanism is bounded by $(1+\sum_{i=1}^{k}\dfrac{i!(m-1)!}{(m+i)!}(n-m-k+1))v(N, f)$, where integer $k\geq 1$ is a parameter to maximize the bound given $n$ and $m$.
\end{theorem}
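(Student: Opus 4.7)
The plan is to decompose the expected total reward under SEVB into two components: the payments to the $m$ assigned players and the payments to the $n-m$ unassigned players. I would show that the first component is responsible for the leading ``$1$'' in the bound, and the second component for the summation term.

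For each assigned player $i$, the argument in Theorem \ref{thm_sevic} already shows that her expected SEV payment equals her original Shapley value $x_i^{sha}(f)$. By efficiency of the Shapley value, $\sum_{k\in N} x_k^{sha}(f)=v(N,f)$, and since every unassigned player's Shapley value is non-negative (by monotonicity of $v$), the total expected payment to assigned players is at most $v(N,f)$.

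For each unassigned $j\in N_{\tau_i}\setminus\{i^*\}$, I would exploit the fact that, under the type replacement $f_j^*=f_{i^*}^\prime$, players $j$ and $i^*$ become duplicates. A case split on whether $i^*\in S$ (marginal contribution vanishes, by duplicity) or $i^*\notin S$ (relabel $j$ as $i^*$) yields
\[
x_j^{sevb}=\sum_{S\subseteq N\setminus\{i^*,j\}}\frac{|S|!(n-|S|-1)!}{n!}\bigl(v(S\cup\{i^*\},f)-v(S,f)\bigr).
\]
Because the summand is non-zero only when $S$ already covers every task in $T\setminus\{\tau_i\}$, we have $|S|\ge m-1$, and each difference is at most $v(N,f)$ by monotonicity. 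Grouping the qualifying subsets by their size $m-1+i$ (for $i=0,1,\dots$), the Shapley weight restricted to the ``essential'' $m+i$ players becomes $\frac{i!(m-1)!}{(m+i)!}$, which matches the coefficient in the stated bound; the multiplicative factor $(n-m-k+1)$ would then arise from counting the remaining unassigned players that can appear in those subsets, after truncating the inner sum at level $k$ and applying a worst-case argument about how the $n-m$ extra players are distributed over the $m$ task groups.

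The main obstacle is precisely this combinatorial bookkeeping: producing both the exact Shapley coefficient $\frac{i!(m-1)!}{(m+i)!}$ and the linear factor $(n-m-k+1)$ requires identifying the worst-case placement of the unassigned players across task groups, propagating the monotonicity bound $v(S\cup\{i^*\})-v(S)\le v(N,f)$ through the Shapley averaging, and finally optimizing the truncation level $k$ to balance the depth of the inner summation against the outer counting factor.
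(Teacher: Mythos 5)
Your opening decomposition is sound and matches the paper's final accounting: the assigned players receive their Shapley values in expectation, which together with the non-negativity of the unassigned players' Shapley values contributes the leading $1\cdot v(N,f)$, and the duplicate-player reduction you derive for an unassigned $j$ (marginal contribution vanishes when $i^*\in S$, otherwise $j$ acts as $i^*$) is correct. But the part you defer as ``combinatorial bookkeeping'' is not bookkeeping --- it is the entire content of the theorem, and the route you sketch for it would not work. The coefficient $\frac{i!(m-1)!}{(m+i)!}$ is \emph{not} the Shapley weight of subsets of size $m-1+i$ (that weight is $\frac{(m-1+i)!(n-m-i)!}{n!}$); it arises from counting orderings of the $m+i$ \emph{essential} players (one per task plus $i$ clones in one group) in which the duplicate pair occupies specific terminal positions, with the dummy players inserted freely. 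Moreover, bounding every difference $v(S\cup\{i^*\})-v(S)$ by $v(N,f)$ and summing Shapley weights over all qualifying $S$ gives a quantity that neither matches the stated coefficients nor produces the $\max_k$ structure of the bound; the theorem's expression is the \emph{exact} total reward in one particular configuration, not a union bound.

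What is missing is the extremal argument. The paper first computes the exact total for the configuration with $k$ identical players on one task, one non-dummy plus $n-m-k+1$ dummies on another, and singletons elsewhere (Lemma~\ref{lem_bound4}), and then shows this configuration is worst-case by a perturbation/exchange argument: moving a dummy from $\tau_i$ to $\tau_j$ and giving it Shapley value $\epsilon$ changes the total by $\frac{2n-3m-4}{m+2}\epsilon$ (because it lowers the non-dummy total by $\epsilon$ while raising every remaining dummy's SEVB reward by $\frac{2}{m+2}\epsilon$), so when the move is profitable the extremum is at $\epsilon$ maximal, i.e.\ the moved player identical to the group's best; iterating yields the Lemma~\ref{lem_bound4} configuration and the optimization over $k$. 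Your per-player independent bound cannot see this coupling --- the reward of each remaining dummy \emph{depends on where the other extra players sit} --- which is precisely why the worst case is nontrivial. To complete the proof you would need to supply the analogues of Lemmas~\ref{lem_bound2}--\ref{lem_bound4} and the exchange argument, not merely tighten the subset-counting.
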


{\color{black}To prove Theorem~\ref{thm_bound2}, we first need to show some key properties of the Shapley value in our setting.}

\begin{figure}[t]
  \centering
      \includegraphics[width=0.2\textwidth]{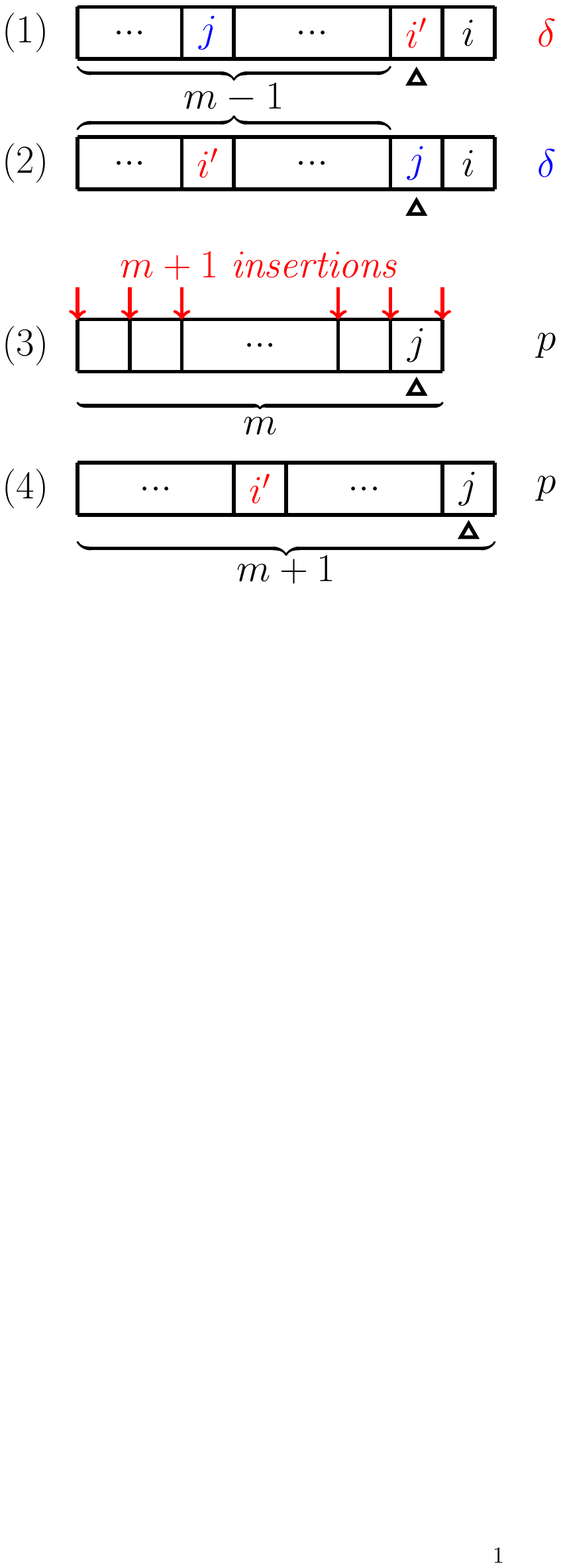}
  \caption{For proof of Lemma~\ref{lem_bound2}\label{fig_0}}
\end{figure}

\begin{lemma}
\label{lem_bound2}
Given that there are $m$ players each of whom can do one of the $m$ tasks and together can finish the tasks before the deadline with a probability $p$, assume that player $i$ is assigned $\tau_i$. Also consider that we add one more player $i^\prime$ for task $\tau_i$, such that $i^\prime$ will not be assigned task $\tau_i$ when the $m+1$ players collaborate together. If player $i^\prime$'s Shapley value in the grand coalition is $\epsilon$, then the Shapley value for $i$ is $\dfrac{p}{m}-m\epsilon$ and the Shapley value for the other players is $\dfrac{p}{m}+\epsilon$. Also, the total reward distributed under the SEVB mechanism for the $m+1$ players is maximized when $i^\prime$ is a dummy player, i.e., $\epsilon = 0$.
\end{lemma}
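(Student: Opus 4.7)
The plan is to exploit the extreme sparsity of the characteristic function on $M \cup \{i'\}$, where $M = \{1,\ldots,m\}$ denotes the original $m$-player team, in order to collapse each Shapley sum to at most three non-zero terms, and then to observe that under SEVB the total depends on $\epsilon$ only through the assigned players' payments.

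First I would characterise $v$. Because each player in $M$ is the only original candidate for her task and $i'$ duplicates only $i$'s task-group, the only $T \subseteq M \cup \{i'\}$ with $v(T) > 0$ are those with $M \subseteq T$ (value $p$) or with $(M \setminus \{i\}) \cup \{i'\} \subseteq T$ but $i \notin T$ (value $q$ for some $0 \leq q \leq p$, the success probability when $i'$ substitutes for $i$). The sub-coalition preference assumption stated just before Example~\ref{ex1} forces $v(T) = p$ whenever $M \subseteq T$, even if $i' \in T$ too.

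Next I would compute each Shapley value by enumerating the (at most three) coalitions $S$ that contribute a non-zero marginal. For $i'$ the only such $S$ is $M \setminus \{i\}$, producing jump $q$ at weight $\frac{(m-1)!\,1!}{(m+1)!}$, so $\epsilon = \phi_{i'} = \frac{q}{m(m+1)}$ and hence $q = m(m+1)\epsilon$. For $i$ the contributing sets are $S = M \setminus \{i\}$ (jump $p$, weight $\frac{1}{m(m+1)}$) and $S = (M \setminus \{i\}) \cup \{i'\}$ (jump $p-q$, weight $\frac{1}{m+1}$), summing to $\frac{p}{m} - \frac{q}{m+1} = \frac{p}{m} - m\epsilon$. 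For $k \in M \setminus \{i\}$ a case split on whether $i$ and $i'$ lie in $S$ yields exactly three contributions, $S = M \setminus \{k\}$ (jump $p$), $S = (M \setminus \{k\}) \cup \{i'\}$ (jump $p$, since $i'$ cannot rescue the missing task $\tau_k$), and $S = (M \setminus \{i,k\}) \cup \{i'\}$ (jump $q$), totalling $\frac{p}{m} + \epsilon$. Efficiency $\phi_i + (m-1)\phi_k + \phi_{i'} = p$ provides a clean sanity check.

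Finally I would evaluate the SEVB total. Truthful reporting makes each assigned player's expected payment equal her true Shapley value; the unassigned $i'$ receives the upgraded value $\phi_{i'}^{\mathrm{up}}$ obtained by substituting $f_{i'}' := f_i'$ in the Shapley formula, turning $i$ and $i'$ into symmetric duplicates. In that modified game the only $S \subseteq M$ with a non-zero jump for $i'$ is again $S = M \setminus \{i\}$, but now with jump $p$ rather than $q$, so $\phi_{i'}^{\mathrm{up}} = \frac{p}{m(m+1)}$, \emph{independent of} $\epsilon$. Invoking efficiency on the assigned players,
\begin{equation*}
\mathrm{Total}_{\mathrm{SEVB}} \;=\; \underbrace{\phi_i + \sum_{k \in M \setminus \{i\}} \phi_k}_{= p - \epsilon} \;+\; \phi_{i'}^{\mathrm{up}} \;=\; p - \epsilon + \frac{p}{m(m+1)},
\end{equation*}
which is strictly decreasing in $\epsilon \geq 0$ and therefore maximised at $\epsilon = 0$. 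The main obstacle will be the enumeration for $\phi_k$: one must notice that $i' \in S$ does not rescue a coalition missing the non-$i$ task player $k$, so the intermediate coalition $(M \setminus \{k\}) \cup \{i'\}$ still has $v = 0$ rather than $q$, which is easy to miscount.
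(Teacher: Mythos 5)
Your proposal is correct, and it reaches every formula in the lemma by a route that differs in technique from the paper's. The paper works in the permutation (ordering) form of the Shapley value: it observes that $i'$ contributes $\delta>0$ only in permutations where $i$ is last and $i'$ second last, transfers that contribution to each $j\neq i$ by a switching bijection, and transfers the original $\frac{p}{m}$ shares by inserting $i'$ into each of the $m+1$ slots of an $m$-player permutation; $i$'s value then follows from efficiency. You instead work in the coalition (subset) form: you characterize $v$ explicitly (only supersets of $M$, with value $p$, and supersets of $(M\setminus\{i\})\cup\{i'\}$ avoiding $i$, with value $q$, are non-null), reduce each Shapley sum to at most three terms, and obtain the closed forms directly, including the relation $q=m(m+1)\epsilon$ that the paper leaves implicit. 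Your enumeration for $\phi_k$ (in particular that $(M\setminus\{k\})\cup\{i'\}$ still has value zero because $i'$ cannot cover $\tau_k$) and your weights $\frac{1}{m(m+1)}$ and $\frac{1}{m+1}$ all check out, and the efficiency identity $\phi_i+(m-1)\phi_k+\phi_{i'}=p$ confirms them. The SEVB step is handled essentially the same way in both proofs: the unassigned $i'$ is paid the Shapley value of a clone of $i$, which is $\frac{p}{m(m+1)}$ independently of $\epsilon$, so the total is $\left(1+\frac{1}{m(m+1)}\right)p-\epsilon$, maximized at $\epsilon=0$. What your approach buys is rigor and auditability---the paper's claim that player $j$ ``also receives whatever player $i'$ has'' rests on a somewhat informal bijection, whereas your three-term sums are mechanical to verify; what the paper's approach buys is that the same insertion and switching pictures extend more readily to the multi-dummy configurations analyzed in Lemmas~\ref{lem_bound3} and~\ref{lem_bound4}.
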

\begin{proof}
Under the computation of the Shapley value, for each permutation of the $m+1$ players, if player $i^\prime$'s marginal contribution is $\delta >0$, then player $i$ must be the last player and $i^\prime$ is the second last player in the permutation (otherwise, $i^\prime$'s marginal contribution is zero, see Figure~\ref{fig_0} permutation $(1)$). If we simply switch player $i^\prime$ with any player $j$ before $i^\prime$, under this new permutation, $j$'s marginal contribution is also $\delta$ (Figure~\ref{fig_0} permutation $(2)$). Clearly, if we change the order of all players before $i^\prime$, we get $(m-1)!$ different permutations and in each permutation $i^\prime$'s marginal contribution is also $\delta$. Similarly for player $j$, we also get $(m-1)!$ different permutations where $j$'s marginal contribution is $\delta$.

Before the addition of player $i^\prime$, for each permutation of the $m$ players, only the last player has a marginal contribution $p$. Therefore, for each player $j$, we have $(m-1)!$ different permutations where $j$ is the last player with marginal contribution $p$. We have in total $m!$ permutations, hence $j$'s Shapley value is $\dfrac{p}{m}$.

After adding player $i^\prime$, for each original permutation of the $m$ players, we can insert $i^\prime$ in-between the $m$ players. We have $m+1$ possible insertions for $i^\prime$, where each such insertion creates a new permutation, in which the marginal contribution of the last player $j\neq i$ (last according the original permutation) remains $p$ among all the $m+1$ new permutations (Figure~\ref{fig_0} permutations $(3)$ and $(4)$). The addition of a player increases the total number of permutations from $m!$ to $(m+1)!$, however, the number of permutations where the marginal contribution of each player $j\neq i$ is $p$ also increases from $(m-1)!$ to $(m+1)(m-1)!$. Therefore, the Shapley values of all players except $i$ (with $m$ players) are transferred to the setting of $m+1$ players. In addition, as depicted in Figure~\ref{fig_0} permutations $(1)$ and $(2)$,
player $j\neq i$ also receives whatever player $i^\prime$ has in terms of Shapley value. Thus, if player $i^\prime$'s Shapley value is $\epsilon$, then the same value is also added to player $j$'s Shapley value. Hence, the new Shapley value for player $j\neq i$ is $\dfrac{p}{m} + \epsilon$. Since the sum of all the players' Shapley value is $p$, we get that player $i$'s Shapley value is $\dfrac{p}{m} - m\epsilon$.

Under the SEVB mechanism, all players except $i^\prime$ receive a reward equal to their Shapley value. For player $i^\prime$, her reward under SEVB is $\dfrac{p}{(m+1)m}$. This is computed by assuming that $i^\prime$ has the same distribution as $i$. Since $i$ and $i^\prime$ are identical, their Shapley value is the same. Now for each permutation, $i$'s marginal contribution is $p$ if and only if $i^\prime$ is the last and $i$ is the second last in the permutation. Thus, the total number of permutations where $i$'s marginal contribution is $p$ is $(m-1)!$, leading to a Shapley value of $\dfrac{p}{(m+1)m}$ for players $i$ and $i^\prime$. Hence, the total reward distributed under SEVB is 
$$ (\dfrac{p}{m} - m\epsilon) +  (\dfrac{p}{m} + \epsilon)(m-1) + \dfrac{p}{(m+1)m}.$$ To simplify it, we get $(1+\dfrac{1}{(m+1)m})p - \epsilon$, which is maximized when $\epsilon = 0$.
\end{proof}

\begin{lemma}
\label{lem_bound3}
Given $n\geq m+1$ players for $m$ tasks, assume that there are $n-(m-1)$ players for task $\tau_i$ and one player for each of the other $m-1$ tasks and the $n$ players together can finish the $m$ tasks before the deadline with probability $p$ where $\tau_i$ is allocated to player $i$. Then the total reward distributed under the SEVB mechanism is maximized when all players $j\neq i$ for $\tau_i$ are dummy players, i.e., their Shapley value under the grand coalition is zero. 
\end{lemma}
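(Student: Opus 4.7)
The plan is to reduce the statement to two monotonicity facts via efficiency of the Shapley value, and then to prove the harder of these facts by a permutation-counting argument that generalizes the one used for Lemma~\ref{lem_bound2}. Write $J = N_{\tau_i} \setminus \{i\}$ for the $n-m$ ``extras'' and, for each $j \in J$, let $\epsilon_j$ denote $j$'s Shapley value in the true grand coalition and $B_j$ denote $j$'s bonus under SEVB (i.e., $j$'s Shapley value in the modified game in which $f_j$ is replaced by $f_i$). By efficiency, $\sum_{\ell \in N} \phi_\ell = p$, so the total SEVB reward satisfies
\[
R = \sum_{\ell \text{ assigned}} \phi_\ell + \sum_{j \in J} B_j = p + \sum_{j \in J}(B_j - \epsilon_j),
\]
and maximizing $R$ reduces to maximizing $\sum_{j \in J}(B_j - \epsilon_j)$ over the extras' distributions.

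I would split the analysis into two claims that are simultaneously tight at the all-dummy configuration: (i) $\epsilon_j \geq 0$ with equality iff $j$ is a dummy; and (ii) $B_j \leq \frac{p}{(m+1)m}$ with equality iff every \emph{other} extra is a dummy. Claim (i) is immediate because monotonicity of $v$ makes every marginal contribution of $j$ non-negative, and $\phi_j = 0$ then forces each such contribution to vanish. The pivotal observation for (ii) is that, because $f_j$ is replaced by $f_i$ inside the definition of $B_j$, the quantity $B_j$ does not depend on $j$'s own distribution; hence claims (i) and (ii) can be attained simultaneously by making every extra a dummy.

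For claim (ii), I would extend the permutation argument of Lemma~\ref{lem_bound2}. In the modified game, $j$ plays the role of $i$; fix an arbitrary permutation $\sigma$ and let $S$ be the set of players preceding $j$. The marginal contribution $v(S \cup \{j\}) - v(S)$ is nonzero only when $i \notin S$ and $S$ already contains the $m-1$ non-$i$ assigned players, in which case it equals $p - v(S)$. If every other extra is a dummy, no player in $S$ can carry out $\tau_i$, so $v(S) = 0$ and the contribution equals $p$; if some non-dummy $j' \in S$ is present, $v(S) > 0$ and the contribution strictly drops. A permutation-by-permutation comparison therefore shows that $B_j$ is weakly increased by turning any other extra into a dummy. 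In the all-other-dummies case, the null-player axiom lets us delete the $n-m-1$ dummies without affecting $B_j$, reducing to the $(m+1)$-player subgame (two $i$-like players on $\tau_i$ and one player per remaining task) already analyzed in Lemma~\ref{lem_bound2}, which gives $B_j = \frac{p}{(m+1)m}$.

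Combining (i) and (ii) gives $R \leq p + (n-m)\cdot \frac{p}{(m+1)m}$, attained precisely at the all-dummy configuration, which is exactly the statement of the lemma. The main technical obstacle is the permutation-wise comparison at the heart of (ii): with several non-dummy extras in $J$, $v(S)$ depends jointly on the best-capability extra in $S \cap J$, so the comparison must isolate the effect of each $f_{j'}$ carefully. Monotonicity of $v$ keeps the inequality pointing in the right direction, but the combinatorial bookkeeping across coalitions $S$ is the chief source of difficulty.
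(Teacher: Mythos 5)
Your proposal is correct, and it takes a genuinely different route from the paper. The paper proves the lemma by induction on the number of extra players for $\tau_i$: starting from the $(m+1)$-player base case of Lemma~\ref{lem_bound2}, it adds one extra at a time, introduces a perturbation $\epsilon$ for the new player's Shapley value and a loss $\delta$ for the decrease in the existing dummy's bonus caused by the new ``competition,'' writes the total as $(1+\tfrac{2}{(m+1)m})p-\epsilon-\delta$ for $n=m+2$, and then asserts that the pattern continues for general $n$. Your argument replaces this induction with the single global decomposition $R = p + \sum_{j\in J}(B_j-\epsilon_j)$ via efficiency, and then optimizes the two sums separately. The observation that $B_j$ is independent of $f_j$ (because SEVB substitutes $f_{i^*}$ for $f_j$ in the bonus formula) is exactly what makes the two optimizations compatible at the all-dummy profile; the paper uses this implicitly but never isolates it. Your permutation-wise comparison showing $B_j\le\frac{p}{(m+1)m}$ --- marginal contributions are nonzero only for coalitions $S$ containing the $m-1$ other assigned players and excluding $i$, where they equal $p-v(S)$, and $v(S)$ is driven to $0$ by making the other extras dummies --- is the rigorous counterpart of the paper's informal ``$i^{d2}$ creates competition for $i^{d1}$, which decreases $i^{d1}$'s reward.'' What your approach buys is a proof that handles arbitrary $n\ge m+1$ in one step and makes the ``following this'' closure of the paper's induction unnecessary; what the paper's approach buys is that its incremental bookkeeping is reused almost verbatim in the proof of Lemma~\ref{lem_bound4}, where the per-step increments must actually be computed rather than merely bounded.
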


\begin{lemma}
\label{lem_bound4}
Given $n\geq m+1$ players for $m$ tasks, assume that there are $k\geq 1$ identical players for task $\tau_j$, $n-k-(m-2)$ players for task $\tau_i \neq \tau_j$ (where all players except for one of them are dummy players), one player for each of the other $m-2$ tasks and the $n$ players together can finish the $m$ tasks before the deadline with a probability $p$. Then the total reward distributed under the SEVB mechanism is $(1+\sum_{i=1}^{k}\dfrac{i!(m-1)!}{(m+i)!}(n-m-k+1))p$. 
\end{lemma}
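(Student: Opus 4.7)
First I would decompose the total SEVB payment by player role. Let $i^{*}$ denote the unique non-dummy $\tau_i$-player (who is assigned $\tau_i$) and $j_{1}$ the $\tau_j$-player assigned to $\tau_j$, and write $p_{\ell_{1}},\dots,p_{\ell_{m-2}}$ for the assigned players of the remaining $m-2$ tasks. There are then $k-1$ unassigned identical $\tau_j$-players and $n-k-m+1$ dummy $\tau_i$-players. By the expectation argument already used in the proof of Theorem~\ref{thm_sevic}, each assigned player's SEVB payment has expectation equal to her Shapley value in the grand coalition. Because every unassigned $j_{q}$ reports exactly what $j_{1}$ reports, the SEVB upgrade $f_{j_{q}}^{*}=f_{j_{1}}^{\prime}$ is trivial, so those players also collect their plain Shapley value. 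Monotonicity of $v$ combined with the dummies having Shapley value zero forces each dummy to be a null player in the grand coalition, so by efficiency the non-dummies collect exactly $p$ in total; the total SEVB payment is thus $p + (n-k-m+1)\alpha$, where $\alpha$ is the SEVB payment to a single dummy $d$. It therefore suffices to prove
\begin{equation*}
\alpha \;=\; p\sum_{i=1}^{k}\frac{i!(m-1)!}{(m+i)!}.
\end{equation*}

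Next I would compute $\alpha$ by analyzing the upgraded game on which the SEVB formula evaluates for $d$. In that game $d$ is re-described with $i^{*}$'s distribution while the remaining dummies keep their own distributions and are still null; by the null-player property those other dummies may be deleted without changing $d$'s Shapley value. The effective $(m+k)$-player game on $\{i^{*},d,j_{1},\dots,j_{k},p_{\ell_{1}},\dots,p_{\ell_{m-2}}\}$ has characteristic value $p$ iff the coalition contains at least one of $\{i^{*},d\}$, at least one $j_{q}$, and all $m-2$ other-task players, and $0$ otherwise. From this structural description one reads off that $d$'s marginal contribution in a uniformly random permutation equals $p$ exactly when the prefix preceding $d$ contains all the $p_{\ell}$'s and at least one $j_{q}$ but does not contain $i^{*}$.

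To get a closed form for $\alpha$ I would use inclusion--exclusion. From the probability that all $p_{\ell}$'s come before $d$ and $i^{*}$ comes after $d$ (read off from relative orders on the $m$-element subset $\{d,i^{*},p_{\ell_{1}},\dots,p_{\ell_{m-2}}\}$ as $(m-2)!/m!$), I would subtract the probability that additionally all $j_{q}$'s come after $d$ (read off from relative orders on the full $(m+k)$-element set as $(m-2)!(k+1)!/(m+k)!$). This yields
\begin{equation*}
\alpha \;=\; p\!\left(\frac{(m-2)!}{m!}-\frac{(k+1)!\,(m-2)!}{(m+k)!}\right).
\end{equation*}
Setting $g(i) := i!(m-2)!/(m+i-1)!$, a one-line calculation gives $g(i)-g(i+1) = i!(m-1)!/(m+i)!$, so the sum in the statement of the lemma telescopes to $g(1)-g(k+1)$, which is precisely the parenthesised expression above. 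Multiplying $\alpha$ by $n-k-m+1$ and adding the $p$ paid to the non-dummies then delivers the stated total.

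The step that I expect to require the most care is the permutation counting for $\alpha$: the clause ``at least one $j_{q}$ before $d$'' rules out a term-by-term comparison with the lemma's sum, which is why I route through inclusion--exclusion to reach a short closed form and only afterwards reconcile it with the sum via the telescoping identity above.
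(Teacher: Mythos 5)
Your proof is correct, and it reaches the stated total by a genuinely different route from the paper's. For the non-dummy players the paper explicitly computes the Shapley value of a normal player ($\sum_{i=1}^k \frac{(i-1)!(m-1)!}{(m+i-1)!}p$, Eq.~\eqref{eq_lemb41}) and of each $\tau_j$-player ($\frac{(k-1)!(m-1)!}{(m+k-1)!}p$, Eq.~\eqref{eq_lemb42}) before invoking efficiency to conclude they sum to $p$; you skip those computations entirely and go straight to efficiency plus the null-player property, which is cleaner since the individual values never enter the final answer. For the dummy's upgraded reward, the paper argues incrementally -- starting from $k=1$ and inserting $j^2, j^3, \dots$ one at a time, tracking which new permutation types acquire marginal contribution $p$ -- which directly yields the sum $\sum_{i=1}^k \frac{i!(m-1)!}{(m+i)!}p$ appearing in the statement. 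You instead characterize in one shot the event on which the dummy's marginal contribution is $p$ (all other-task players and at least one $j_q$ precede $d$, while $i^*$ follows), evaluate its probability in closed form by inclusion--exclusion, and reconcile with the stated sum via the telescoping identity $g(i)-g(i+1)=\frac{i!(m-1)!}{(m+i)!}$ for $g(i)=\frac{i!(m-2)!}{(m+i-1)!}$; I checked that your closed form $\frac{(m-2)!}{m!}-\frac{(k+1)!(m-2)!}{(m+k)!}$ agrees with Eq.~\eqref{eq_lemb43}. Your approach buys a short, verifiable closed form and avoids the paper's case-by-case permutation bookkeeping; the paper's approach buys the sum in exactly the form stated. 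Two small points to tighten: you should note explicitly that ``dummy'' here means Shapley value zero, which implies null only because $v$ is monotone (you do gesture at this), and that the other dummies remain null after $d$ is re-described with $i^*$'s distribution -- this holds because re-describing $d$ only weakly increases values of coalitions containing $d$ and leaves all others unchanged, so a player who never improved any coalition still does not -- which is what licenses deleting them before the permutation count.
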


The proofs of Lemma~\ref{lem_bound3} and \ref{lem_bound4} are given in the appendix.

\begin{proof}[Proof of Theorem~\ref{thm_bound2}]
From Lemma~\ref{lem_bound3}, we know that when there is only one player for each task except for one task, then the worst case happens when $n-m$ players are dummy players for one task. The total reward distributed in this case is $(1+\dfrac{n-m}{m(m+1)})v(N, f)$ (set $k=1$ in Lemma~\ref{lem_bound4}). However, this is not necessarily the worst case in general. It is evident that if we simply move a dummy player from one task group to another task group, it will not change the dummy player's reward under the SEVB mechanism (it is always $\dfrac{1}{(m+1)m}v(N,f)$). 

Interestingly, if we move a dummy player to another task group and make it non-dummy, it may increase the total reward distributed under the SEVB mechanism. Assume that initially all dummy players are for task $\tau_i$. Now if we move a dummy $i^*$ from $\tau_i$ to $\tau_j\neq \tau_i$. To make $i^*$ non-dummy, assume that the Shapley value for $i^*$ is $\epsilon >0$ (but it does not dominate the original player for $\tau_j$, otherwise, $v(N, f)$ will be increased and make the two settings not comparable). Hence, the total reward for the non-dummy players has reduced by $\epsilon$ due to the change of $i*$ (property of the Shapley value). The reward for $i^*$ stays the same as $\dfrac{1}{(m+1)m}v(N,f)$. The key difference is the reward of the other dummy players for $\tau_i$. Following the proof of Lemma~\ref{lem_bound4}, the new Shapley value for the other dummy players becomes $\dfrac{1}{(m+1)m}v(N,f) + \dfrac{2}{m+2}\epsilon$, which has been increased by $\dfrac{2}{m+2}\epsilon$. Thus, the total reward increase after moving $i^*$ is 
\begin{equation}
\label{eq_thmb1}
\begin{aligned}
-\epsilon+\dfrac{2}{m+2}\epsilon (n-m-1) =\\ \dfrac{2n-3m-4}{m+2}\epsilon
\end{aligned}
\end{equation}
If $\dfrac{2n-3m-4}{m+2}\epsilon > 0$, then the difference is maximised when $\epsilon$ is maximised, which happens when $i^*$ is identical to the original player for $\tau_j$. Therefore, if moving a dummy player from $\tau_i$ to another task group increases the total reward, then the worst case happens when the dummy player is identical to the best player in the new group.

If we keep moving dummy players from $\tau_i$ to other groups, if this increases the total reward, then we can again prove that the worst case exists when the dummy player is identical to the best player in the new group. Moreover, following the analysis of Lemma~\ref{lem_bound4}, we can also prove that the reward stays the same even if the dummy players moved out from $\tau_i$'s group are not going to the same task group. Thus, we can keep moving dummy players to the same task $\tau_j$ to check if it increases the reward and stop when it does not increase the reward. {\color{black}When this stops, we reach the worst case setting, which is the setting given by Lemma~\ref{lem_bound4}.}
%
\end{proof}

\section{The VCG with Verification}
Our coalitional game setting can be also treated as an auction setting and apply standard auction mechanisms. This section shows that VCG can be adopted with execution verification to satisfy incentive compatibility in our setting.

\begin{framed}
	\noindent\textbf{VCG with Execution Verification (VCGEV)}\\
	\rule{\textwidth}{0.5pt}
Given all players' report profile $f^\prime$, 
for each player $i$,
\begin{itemize}
\item if $i$ is assigned a task under under $\pi_N^{f^\prime}$, and her realised execution time is $e_i$, $i$'s reward is defined as:
\begin{equation}
\scalebox{0.84}{
$
x_i^{vcgev}(f^\prime, e_i)= v(\pi_N^{f^\prime}, (f_i^{e_i}, f_{-i}^\prime)) - v(N\setminus\{i\}, f_{-i}^\prime)
$}
\end{equation}
where $f_i^{e_i}$ represents $e_i$ for $i$ and is defined as:
\begin{equation*}
f_i^{e_i}(e) = 
\begin{cases}
    1      & \quad \text{if } e = e_i\\
    0       & \quad \text{otherwise}
  \end{cases}
\end{equation*}

\item otherwise, $i$'s reward is $x_i^{vcgev}(f^\prime)= 0$.
\end{itemize}
\end{framed}

VCG is often applied in a non-cooperative setting, so the reward for each player is her marginal contribution given that all the other players are already in the coalition. A player's reward equals the marginal contribution in one permutation where the player is the last one joining the group. Therefore, if in the same task group, two good players are identical, then neither of them will be rewarded. On the other hand, if only one player can handle each task and the others are not capable, then each player who is assigned a task would get paid $v(N)$. Therefore, the reward distribution under VCG does not consider players' contributions in a more thoughtful manner as the Shapley value does. Nonetheless, the VCG based mechanism is incentive compatible. 

\begin{theorem}
The VCGEV mechanism is incentive compatible.
\end{theorem}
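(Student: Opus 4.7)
The plan is to follow the classical VCG argument, with one preliminary step that converts the stochastic, ex post reward into a clean expectation before invoking the usual welfare-maximization logic. Fix a player $i$ and the reports $f_{-i}'$ of the others; the goal is to show that truthful reporting $f_i' = f_i$ maximizes $i$'s expected reward, where the expectation is taken over the realized execution time $e_i$ drawn from $i$'s true distribution $f_i$.

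First I would observe that, by the law of total probability applied to the event ``all tasks finish by deadline $d$'' under assignment $\pi_N^{f'}$,
\begin{equation*}
E_{e_i \sim f_i}\bigl[v(\pi_N^{f'}, (f_i^{e_i}, f_{-i}'))\bigr] = v(\pi_N^{f'}, (f_i, f_{-i}')).
\end{equation*}
Hence, when $i$ is assigned a task, $i$'s expected reward equals $v(\pi_N^{f'}, (f_i, f_{-i}')) - v(N\setminus\{i\}, f_{-i}')$, and the subtracted Clarke-pivot term is independent of $i$'s own report. So $i$'s decision reduces to choosing $f_i'$ to maximize the first term, i.e., the true-distribution success probability of the assignment induced by the submitted reports. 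By definition, $\pi_N^{(f_i, f_{-i}')}$ is the assignment that maximizes exactly this quantity, so truthful reporting is optimal: any misreport $\tilde f_i$ yields some alternative assignment $\pi'$ with $v(\pi', (f_i, f_{-i}')) \leq v(N, (f_i, f_{-i}'))$.

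The remaining cases concern transitions between being assigned and being unassigned. If $i$ is assigned under truth but a misreport leaves $i$ unassigned, the reward drops to $0$, which is at most the truthful expected reward $v(N, (f_i, f_{-i}')) - v(N\setminus\{i\}, f_{-i}')$ by monotonicity of $v$. If $i$ is already unassigned under truthful reporting, then the optimal assignment on $N$ under $(f_i, f_{-i}')$ does not use $i$, so $v(N, (f_i, f_{-i}')) = v(N\setminus\{i\}, f_{-i}')$; hence any misreport that forces $i$ into the assignment yields expected reward $v(\pi', (f_i, f_{-i}')) - v(N\setminus\{i\}, f_{-i}') \leq 0$, again no improvement over the truthful payoff of $0$.

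The main conceptual step, and essentially the only nontrivial obstacle, is the expectation identity in the first display: it is what aligns the partially-verified, sample-based payment with the full VCG social welfare and lets the classical Clarke-pivot argument close. Once that is in hand, the remainder is a routine case analysis on whether $i$ is assigned under the truthful and misreported profiles, combined with the optimality of $\pi_N^{\cdot}$ and the monotonicity of $v$.
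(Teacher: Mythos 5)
Your proposal is correct and follows essentially the same route as the paper's proof: establish that the expected verified payment equals the true-profile success probability of the induced assignment, note that the Clarke-pivot term is report-independent, and conclude from the optimality of $\pi_N^{(f_i,f_{-i}^\prime)}$ (the paper phrases this last step as a proof by contradiction, you argue it directly, but the content is identical). The concluding case analysis for unassigned players also matches the paper's.
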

\begin{proof}
If a player $i$ is assigned a task, then her expected reward is $E[x_i^{vcgev}(f^\prime, e_i)] = \sum_{e_i\in E_i} f_i(e_i)x_i^{vcgev}(f^\prime, e_i)$. $v(N\setminus\{i\}, f_{-i}^\prime)$ is the probability to finish all the tasks without $i$'s participation, which is independent of $i$. The expectation of $v(\pi_N^{f^\prime}, (f_i^{e_i}, f_{-i}^\prime))$ is $\sum_{e_i\in E_i} f_i(e_i) v(\pi_N^{f^\prime}, (f_i^{e_i}, f_{-i}^\prime))$, which equals $v(N, (f_i, f_{-i}^\prime))$ if $i$ reports $f_i$ truthfully. It is evident that $v(N, (f_i, f_{-i}^\prime)) \geq v(N\setminus\{i\}, f_{-i}^\prime)$, $i$'s reward is non-negative when reporting truthfully.

Suppose that when $i$ misreports $f_i^\prime \neq f_i$, we have $\sum_{e_i\in E_i} f_i(e_i) v(\pi_N^{f^\prime}, (f_i^{e_i}, f_{-i}^\prime)) > v(N, (f_i, f_{-i}^\prime))$. Since $\sum_{e_i\in E_i} f_i(e_i) v(\pi_N^{f^\prime}, (f_i^{e_i}, f_{-i}^\prime))$ uses $i$'s true execution outcome to calculate the reward, the only thing that $f_i^\prime$ influences is the task assignment $\pi_N^{f^\prime}$. If indeed assignment $\pi_N^{f^\prime}$ gives a higher expected probability to finish all the tasks, then the definition of $v(N, (f_i, f_{-i}^\prime))$ should have chosen this assignment. However, we know $v(N, (f_i, f_{-i}^\prime))$ is optimal, which is a contradiction.

If a player $j$ is not assigned a task, then her marginal contribution is zero. If she misreports to be assigned a task, her expected reward becomes negative (following the above analysis).
\end{proof}

\begin{theorem}
Given any execution time distribution report profile $f$, the total reward distributed under the VCGEV mechanism can be as small as zero and as large as $m\times v(N, f)$.
\end{theorem}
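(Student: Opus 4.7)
The plan is to interpret the ``total reward distributed'' as the expected total payment when players report truthfully, so that the realized $e_i$'s are averaged out (the per-realization reward is a random variable and can even be negative; only its expectation is a meaningful global quantity). Under this reading, I first derive a clean closed form for the sum, then use it to establish both extremes simultaneously with small witnessing instances.

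\textbf{Step 1 (closed form).} For each player $i$ who is assigned a task in $\pi_N^{f^\prime}$, write out $E[x_i^{vcgev}(f^\prime,e_i)] = \sum_{e_i} f_i(e_i) x_i^{vcgev}(f^\prime,e_i)$. Since the term $v(N\setminus\{i\}, f_{-i}^\prime)$ does not depend on $e_i$, and since under truthful reporting $\sum_{e_i} f_i(e_i)\, v(\pi_N^{f},(f_i^{e_i},f_{-i})) = v(\pi_N^{f},f) = v(N,f)$ (the optimal assignment evaluated under the true profile is exactly $v(N,f)$), we get $E[x_i^{vcgev}] = v(N,f) - v(N\setminus\{i\}, f_{-i})$. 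Summing over the $m$ players assigned in $\pi_N^{f}$ (one per task), the expected total reward is
\begin{equation*}
\sum_{i \in \pi_N^{f}(T)} \bigl( v(N,f) - v(N\setminus\{i\}, f_{-i}) \bigr).
\end{equation*}
Unassigned players contribute $0$, by the second case of the mechanism.

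\textbf{Step 2 (two-sided bound).} By monotonicity of $v$ (noted after the definition of $v$), $0 \leq v(N\setminus\{i\}, f_{-i}) \leq v(N,f)$, so each summand lies in $[0, v(N,f)]$. With exactly $m$ summands, the total expected reward lies in $[0, m\cdot v(N,f)]$.

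\textbf{Step 3 (both bounds are tight).} For the upper extreme, take the instance with $n=m$ and $|N_{\tau_j}|=1$ for every task, i.e., exactly one player is capable of each task, with any distributions that give $v(N,f) > 0$. Removing any assigned $i$ leaves task $\tau_i$ with nobody to execute it, forcing $v(N\setminus\{i\}, f_{-i}) = 0$, so each of the $m$ assigned players receives expected reward $v(N,f)$, summing to $m\cdot v(N,f)$. For the lower extreme, duplicate each task group: take $|N_{\tau_j}|=2$ with two players having identical distributions for each $\tau_j$. For any assigned $i$, its ``twin'' can substitute with the same capability, hence $v(N\setminus\{i\}, f_{-i}) = v(N,f)$ and every assigned player's expected reward is $0$.

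The only subtle point is Step 1: one has to be careful that $v(\pi_N^{f},(f_i^{e_i},f_{-i}))$ keeps the \emph{assignment} fixed and only replaces $i$'s distribution by its realization, so averaging over $e_i \sim f_i$ genuinely recovers $v(N,f)$ rather than a possibly larger quantity (which is exactly the ingredient used inside the incentive-compatibility proof, so no new work is needed). Everything else is just the monotonicity of $v$ and two small instance constructions.
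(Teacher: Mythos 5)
Your proposal is correct and follows essentially the same route as the paper: both arguments rest on the two witnessing instances (one player per task for the upper extreme, identical twins per task group for the lower extreme). You additionally make explicit the expected-reward closed form $v(N,f)-v(N\setminus\{i\},f_{-i})$ and the monotonicity argument showing these are genuine two-sided bounds, which the paper only asserts implicitly; this is a welcome tightening rather than a different approach.
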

\begin{proof}
For each player who is assigned a task, if there is another identical player in the same task group, then the player's reward under the VCGEV mechanism is zero. Therefore, in the worst case, all players receive zero reward.

In the other extreme case, where there is only one player for each task, then the reward for each player is $v(N, f)$, which is the maximal reward a player can get in any setting. Hence, the maximal total reward under the mechanism is $m\times v(N, f)$.
\end{proof}

It is worth mentioning that there is another simple mechanism, which is also truthful and the total distributed reward is exactly what they have generated. The mechanism simply shares the reward equally among all players after they have executed the whole project. More specifically, each player receives $V/n$ if they finished the project before the deadline, otherwise receives zero. It is easy to prove that the mechanism is IC. However, the weakness is that it does not pay them according to their marginal contributions/capabilities.

\section{Experiments}
We evaluate the proposed mechanisms on many instances that differ in the task allocation settings, completion-time distributions, and the deadlines. We generate random distributions with support size $s$, which is given as an input, such that the completion times for each player are in the range of [1\dots $s$] with uniform distribution.


Here we present the results for settings with $2$, $3$ and $4$ tasks, and with different number of players. In theory, we have seen that the total reward bound under the SEVB mechanism is a kind of linear function of the number of players $n$, whereas the higher bound of the VCGEV mechanism is independent of $n$. 
In the experiments of $4$ tasks with $4$ players for each task group, Figure~\ref{fig_4t} shows the total distributed reward in proportion of $v(N,f)$ for $20$ random instances. The results show that VCGEV varies a lot between settings, while SEVB is fairly stable. As in theory, the bound of SEVB is increasing as $n$ increases. We tested it on $2$ task settings with the number of players for each task varying from $2$ to $8$. The results in Figure~\ref{fig_2ts} show that for $10$ random instances the average proportion over $v(N,f)$ is actually decreasing as $n$ increases (because more players for the same task reduce their Shapley value). A similar trend holds for the VCGEV mechanism (Figure~\ref{fig_2tvcg}). We also examine settings with varying number of players for the different tasks. Figure~\ref{fig_3ts} shows the percentage over $v(N,f)$ on SEVB for $3$ tasks with $[3,3,3]$, $[4,3,2]$ and $[5,3,1]$ player settings on $50$ instances. It indicates that when the players are more imbalanced between tasks, the total reward is reduced.

\begin{figure}[t]
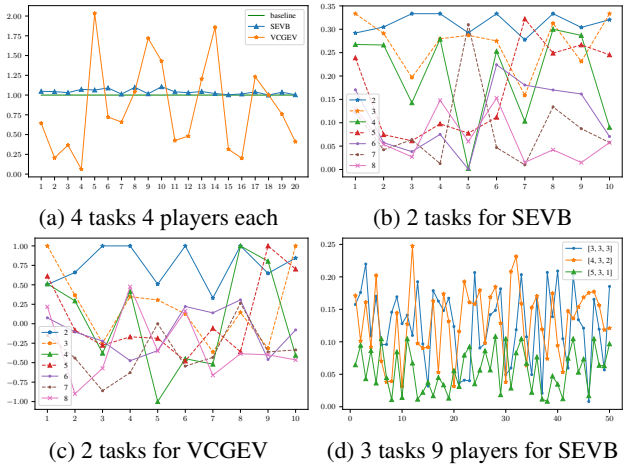

\centering
\begin{subfigure}[b]{0.23\textwidth}
\centering
      \scalebox{0.37}{\input{4t.pgf}}
  \caption{4 tasks 4 players each\label{fig_4t}}
\end{subfigure}
\begin{subfigure}[b]{0.23\textwidth}
  \centering
      \scalebox{0.37}{\input{2t_sValue.pgf}}
  \caption{2 tasks for SEVB\label{fig_2ts}}
\end{subfigure}
\begin{subfigure}[b]{0.23\textwidth}
  \centering
      \scalebox{0.37}{\input{2t_VCG.pgf}}
  \caption{2 tasks for VCGEV\label{fig_2tvcg}}
\end{subfigure}
\begin{subfigure}[b]{0.23\textwidth}
  \centering
      \scalebox{0.37}{\input{3t_sValue.pgf}}
  \caption{3 tasks 9 players for SEVB\label{fig_3ts}}
\end{subfigure}
\caption{Experimental results for SEVB and VCGEV}
\end{figure}

\section{Conclusion}
We have studied a task allocation setting which merges information revelation challenge in mechanism design and the payoff distribution challenge in cooperative game theory. The two challenges cannot be solved by using existing techniques from just one of the two fields. We proposed the very fist attempt to solve the two challenges together. The solution guarantees that players will truthfully reveal their private information and the rewards they receive from the coalition are fairly distributed. The cost to achieve this is that we might need to distribute more rewards to the players than what they can achieve. However, the extra reward is bounded and the experiments showed that the extra reward is diminishing when more players are involved. We have not investigated the other solution concepts such as core in the current analysis~\cite{Aumann1992gt}. 

\appendix

\section{Proofs}
\begin{proof}[Proof of Lemma~\ref{lem_bound3}]
From Lemma 1, we know that when $n=m+1$, this lemma holds. Now if $n=m+2$, we need to add one more player for task $\tau_i$. Let us call the three players for $\tau_i$ $i, i^{d1}$ and $i^{d2}$ and $i$ is the player assigned task $\tau_i$ and $i^{d1}$ is a dummy player and $i^{d2}$ is the newly added player. Since $i^{d1}$ is a dummy player, its Shapley value is zero. If the Shapley value for $i^{d2}$ is $\epsilon$, then following the proof of Lemma 1, we get that the Shapley value for all players $j\not\in \{i, i^{d1}, i^{d2}\}$ is $\dfrac{p}{m}+\epsilon$ and the Shapley value for $i$ is $\dfrac{p}{m} - m\epsilon$, which are also their reward distributed under the SEVB mechanism. The reward for $i^{d2}$ under the SEVB mechanism is $\dfrac{p}{(m+1)m}$, but the reward for $i^{d1}$ is $\dfrac{p}{(m+1)m} - \delta$ (because $i^{d2}$ creates competition for $i^{d1}$, which decreases $i^{d1}$'s reward). Therefore, the total reward distributed under the SEVB mechanism is 
$$(\dfrac{p}{m} - m\epsilon) + (\dfrac{p}{m}+\epsilon)(m-1) + 2\dfrac{p}{(m+1)m} - \delta,$$
which is $(1+\dfrac{2}{(m+1)m})p - \epsilon - \delta$. This is maximised when $\epsilon = \delta = 0$. Following this, we can prove that for any $n \geq m+1$, the lemma holds.
\end{proof}

\begin{proof}[Proof of Lemma~\ref{lem_bound4}]
Let $k$ players for $\tau_j$ be $\{j^1, \cdots, j^k\}$ and the non-dummy player for $\tau_i$ be $i$. Let us compute the Shapley value for the player $l$ for task $\tau_l \in T\setminus \{\tau_i, \tau_j\}$. 

If $k=1$, we have $m$ non-dummy players (one for each task), then marginal contribution for $l$ is $p$ only when $l$ is ordered in the last among the $m$ non-dummy players in a permutation (see Figure~\ref{fig_1} permutation type $(1)$, dummy players can be placed anywhere in the permutation). Hence, $l$'s Shapley value is $\dfrac{p}{m}$. 

If $k=2$, for the permutation type $(1)$, we can insert $j^2$ anywhere in the permutation without changing $l$'s marginal contribution. In addition, adding player $j^2$ will increase $l$'s marginal contribution for the permutation where $l$ is not ordered in the last (see Figure~\ref{fig_1} permutation type $(2)$, where $j^1$ is ordered in the last). Without $j^2$, $l$'s marginal contribution under permutation type $(2)$ is zero. Because of $j^2$, $l$'s marginal contribution becomes $p$. Therefore, when $k=2$, the new Shapley for $l$ is $\dfrac{p}{m}+\dfrac{p}{(m+1)m}$.

If $k=3$, for the permutation types $(1)$ and $(2)$, we insert $j^3$ anywhere without changing $l$'s marginal contribution, so the Shapley value under $k=2$ stays for $k=3$. In addition, adding $j^3$ increases $l$'s marginal contribution from zero to $p$ for the permutation where $l$ is ordered before $j^1$ and $j^2$ (see Figure~\ref{fig_1} permutation type $(3)$). If we ignore all the dummy players, among all the $(m+2)!$ permutations, there $(m-1)!\times 2$ type $(3)$ permutations. Hence, $l$'s Shapley value is increased by $\dfrac{p}{(m+2)(m+1)m}$ by adding $j^3$.

Following the above analysis, for any $k$, we get the Shapley value for $l$ is
\begin{equation}
\label{eq_lemb41}
\begin{aligned}
\dfrac{p}{m}+\dfrac{p}{(m+1)m}+\cdots+ \dfrac{(k-1)!p}{(m+k-1)(m+k-2)\cdots m} =\\ \sum_{i=1}^k \dfrac{(i-1)!(m-1)!}{(m+i-1)!}p
\end{aligned}
\end{equation}
This is also the Shapley value for player $i$. 

Let us compute the Shapley value for players $\{j^1, \cdots, j^k\}$. Following Figure~\ref{fig_2}, if $k=1$, then $j$'s marginal contribution is $p$ as long as $j^1$ is the last player among the $m$ non-dummy players (see Figure~\ref{fig_2} permutation type $(1)$). If we ignore the dummy players, then there are $m!$ total permutations, among them $(m-1)!$ permutations have $j^1$ in the last. Therefore, $j^1$'s Shapley value is $\dfrac{p}{m}$ when $k=1$. Now if $k=2$, among all the $(m-1)!$ original permutations, we have to insert another player $j^2$ inside each permutation. Among all the insertions, only if we insert $j^2$ after $j^1$, $j^1$'s marginal contribution will stay the same (see Figure~\ref{fig_2} permutation type $(2)$). Therefore, $j^1$'s Shapley value is reduced to $\dfrac{p}{(m+1)m}$. Keep doing this until we add the last player $j^k$ (see Figure~\ref{fig_2} permutation type $(k)$), we get the reduced Shapley value for each $j^1$ as (which is the same for all $j\in \{j^1, \cdots, j^k\}$)
\begin{equation}
\label{eq_lemb42}
\dfrac{(k-1)!(m-1)!}{(m+k-1)!}p
\end{equation}

For each dummy player $i^*$ for $\tau_i$, we compute the reward for $i^*$ under the SEVB mechanism as its Shapley value when $i^*$ is identical to $i$. Following Figure~\ref{fig_3}, if $k=1$, then $i^*$ has marginal contribution $p$ only if $i$ is the last player and $i^*$ is the second last player in a permutation (see Figure~\ref{fig_3} permutation type $(1)$). If we ignore the other dummy players, there are $(m+1)!$ total permutations and only $(m-1)!$ of them have $i$ in the last and $i^*$ in the second last position. Thus, $i^*$'s Shapley value is $\dfrac{p}{(m+1)m}$ when $k=1$. Now if we add another player $j^2$, firstly, it will not affect all the permutation of type $(1)$, in addition, for some permutations where $i^*$ has zero contribute before may have a positive contribution now. These are the permutations where the last three players are either $(i^*, j^1, i)$ or $(i^*, i, j^1)$ (see Figure~\ref{fig_3} permutation type $(2)$). Before insert $j^2$, we have $(m-2)!2$ such permutations, where $i^*$'s contribution is zero. If we add $j^2$ before $i^*$, $i^*$'s contribution becomes $p$, which in average will give $i^*$ an extra reward $\dfrac{2p}{(m+2)(m+1)m}$. Following this procedure to add all $k$ players, we get the total new Shapley value for $i^*$ as
\begin{equation}
\label{eq_lemb43}
\sum_{i=1}^k \dfrac{i!(m-1)!}{(m+i)!}p
\end{equation}

To sum up, the total reward distributed under the SEVB mechanism is the sum of $(m-1)$ times of \eqref{eq_lemb41}, $k$ times of \eqref{eq_lemb42}, and $(n-m-k+1)$ times of \eqref{eq_lemb43}. It is evident that the sum of $(m-1)$ times of \eqref{eq_lemb41} and $k$ times of \eqref{eq_lemb42} is $p$ because it is the total Shapley value under the ground coalition. Hence, the total reward is $(1+\sum_{i=1}^{k}\dfrac{i!(m-1)!}{(m+i)!}(n-m-k+1))p$.
\end{proof}

\begin{figure}[t]
  \centering
\begin{subfigure}[b]{0.5\textwidth}
\centering
      \includegraphics[width=0.8\textwidth]{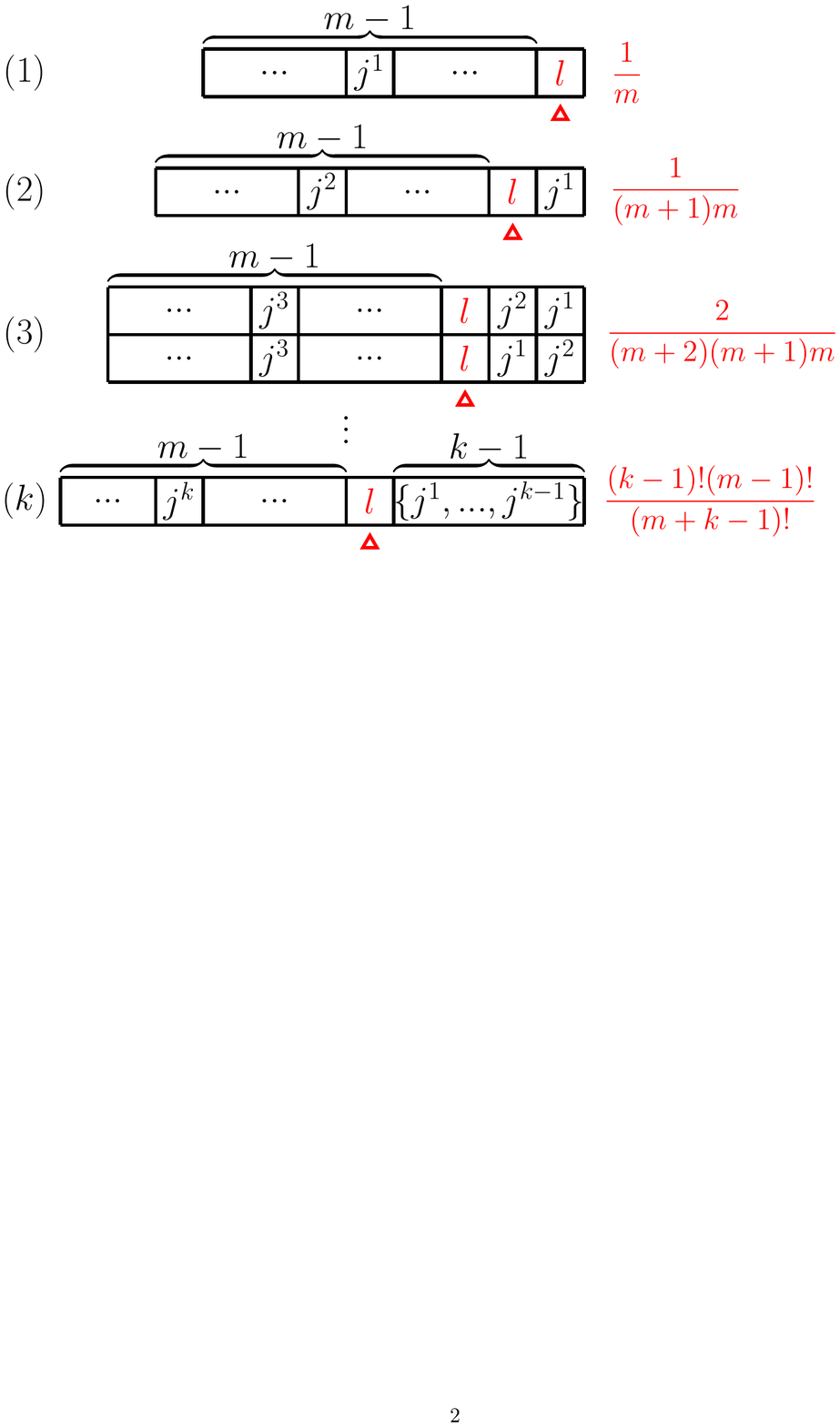}
  \caption{Reward computation for a normal player $l$.\label{fig_1}}
\end{subfigure}
\begin{subfigure}[b]{0.5\textwidth}
  \centering
      \includegraphics[width=0.8\textwidth]{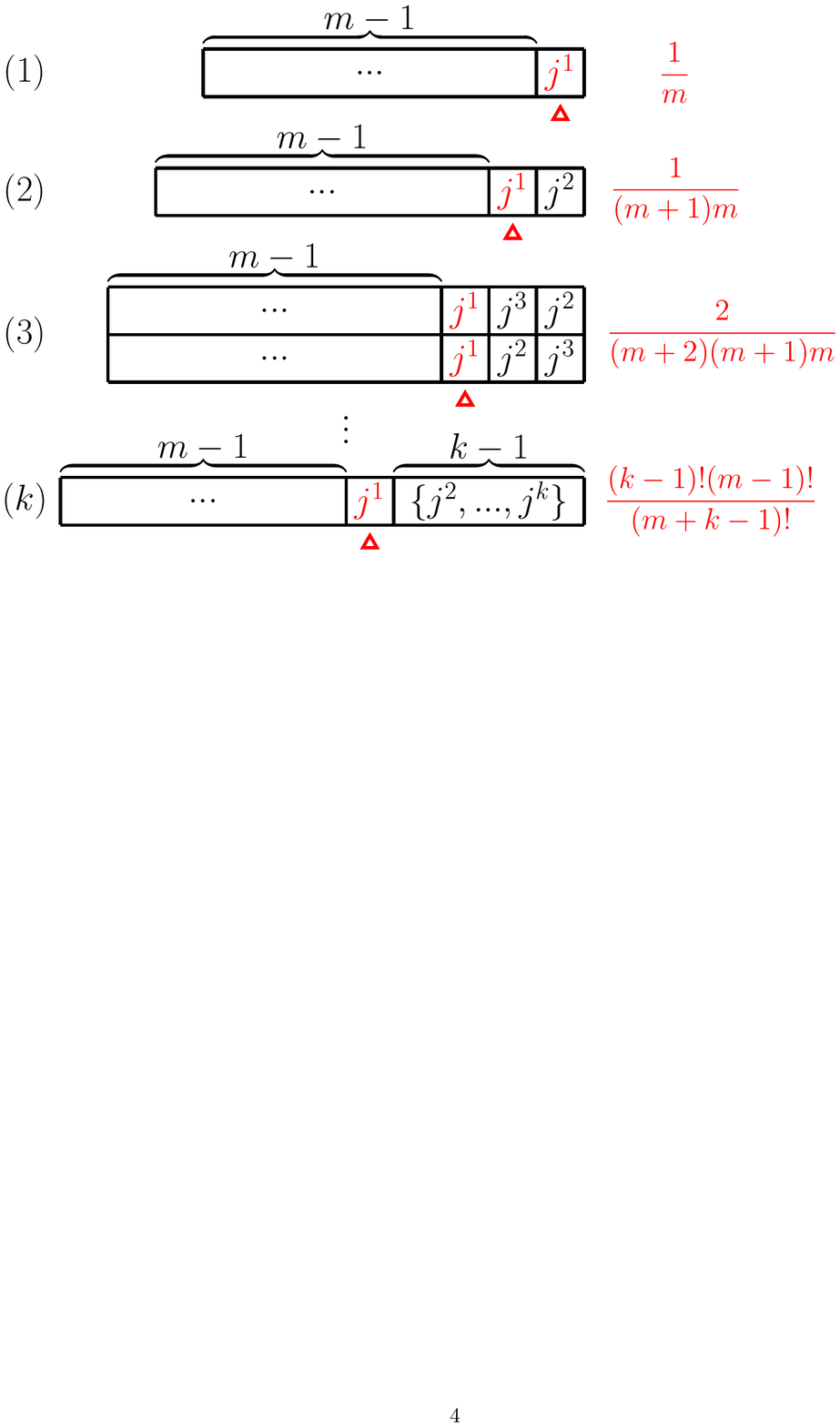}
  \caption{Reward computation for a player $j^i$ in $\tau_j$'s group.\label{fig_2}}
\end{subfigure}
\begin{subfigure}[b]{0.5\textwidth}
  \centering
      \includegraphics[width=0.8\textwidth]{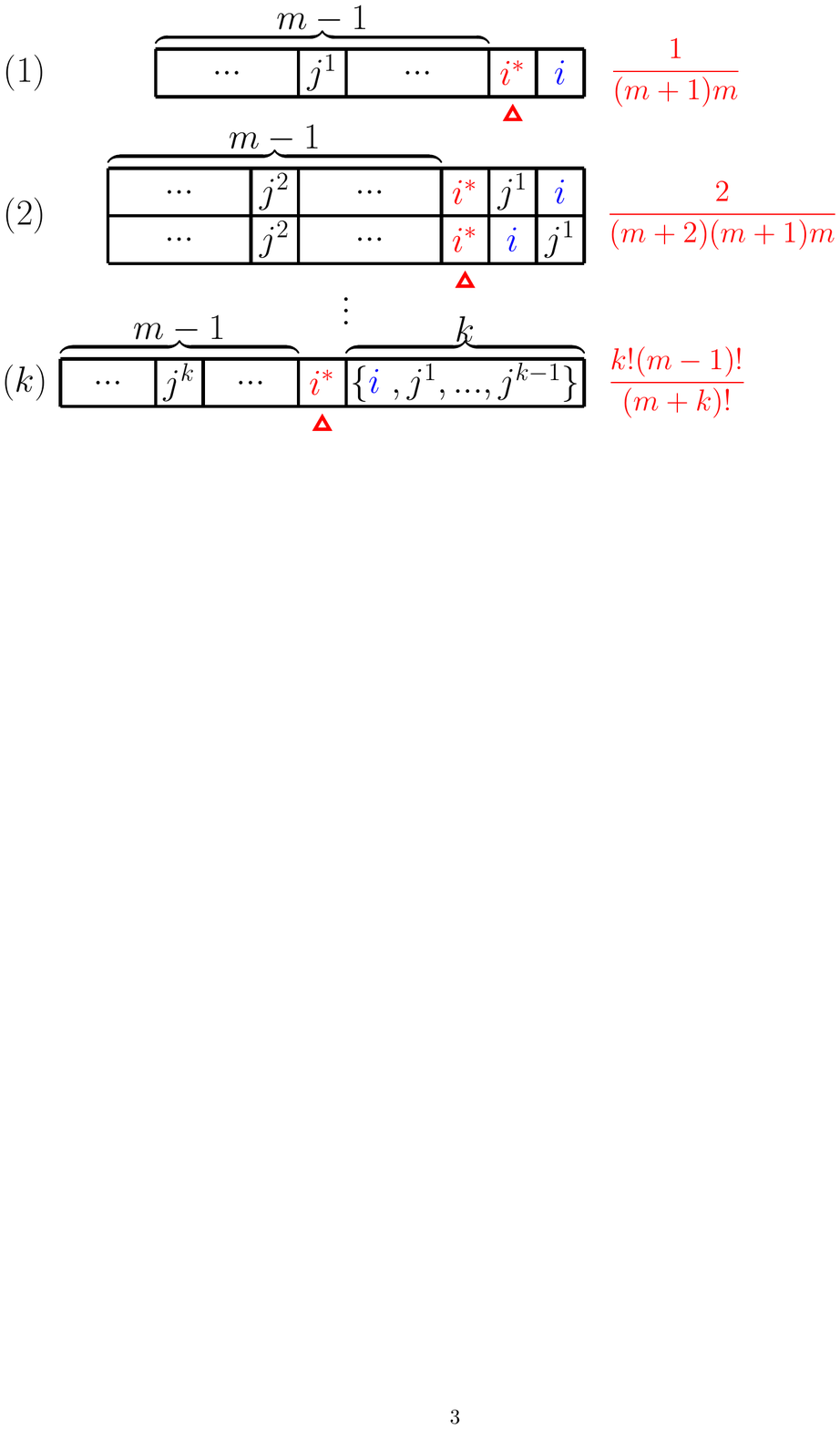}
  \caption{Reward computation for a dummy player $i^*$.\label{fig_3}}
  \end{subfigure}
  \caption{Reward computation under the SEVB mechanism.}
\end{figure}

\bibliographystyle{aaai}
\bibliography{bib}

\end{document}